\date{
November  7, 2013%
}
\newtheorem{definition}{Definition}
\newtheorem{theorem}{Theorem}
\newcommand\fakepar[1]{\smallskip\noindent\textbf{#1.}}
\newcounter{noqed}
\newcommand{\qed}{ \ifmmode\mbox{ }\fi\rule[-.05em]{.3em}{.7em}\setcounter{noqed}{0}}
\newenvironment{proof}[1][{}]{\noindent{\bf Proof#1. }\setcounter{noqed}{1}}{\ifnum\value{noqed}=1\qed\fi\par\medskip}
\renewcommand{\epsilon}{\varepsilon}
\renewcommand{\theta}{\vartheta}
\newcommand{\1}{\mathbf 1}
\newcommand{\lst}[2]{${#1}_0$,~${#1}_1$, $\dots\,$,~${#1}_{#2-1}$}
\def\..{\,\mathpunct{\ldotp\ldotp}} % Middle stuff for intervals. Usage: \..
\newcommand{\comment}[1]{}
\title{Axioms for Centrality}
\author{Paolo Boldi\quad Sebastiano Vigna\thanks{The authors have been supported by the
EU-FET grant NADINE (GA 288956).}\\\normalsize Dipartimento di informatica, Universit\`a degli Studi di Milano, Italy}
\begin{document}
\bibliographystyle{plain}
\maketitle

\begin{abstract}
Given a social network, which of its nodes are more central? This question 
has been asked many times in sociology, psychology and computer science, and a
whole plethora of \emph{centrality measures} (a.k.a.~\emph{centrality indices}, or \emph{rankings}) were proposed to account for
the importance of the nodes of a network. In this paper, we try to provide a 
mathematically sound survey of the most important classic centrality measures
known from the literature and propose an \emph{axiomatic} approach 
to establish whether they are actually doing what they have been
designed for. Our axioms suggest
some simple, basic properties that a centrality measure should exhibit.

Surprisingly, only a new simple measure based on distances,
\emph{harmonic centrality}, turns out to satisfy all axioms; essentially, harmonic
centrality is a correction to Bavelas's classic \emph{closeness
centrality}~\cite{BavMMGS} designed to take unreachable nodes into account in a natural way. 

As a sanity check, we examine in turn each measure under the lens of information
retrieval, leveraging state-of-the-art knowledge in the discipline to measure
the effectiveness of the various indices in locating web pages that are relevant
to a query. While there are some examples of such comparisons in the literature, here
for the first time we also take into consideration centrality
measures based on distances, such as closeness, in an information-retrieval
setting. The results closely match the data we gathered using our axiomatic approach.

Our results suggest that centrality measures based on distances, which in
the last years have been
neglected in information retrieval in favor of spectral centrality measures, 
do provide high-quality signals; moreover, harmonic centrality
pops up as an excellent general-purpose centrality index for arbitrary directed graphs.
\end{abstract}

\section{Introduction}

In recent years, there has been an ever-increasing research activity in the
study of real-world complex networks~\cite{WFSNAMA} (the
world-wide web, the autonomous-systems graph within the Internet, coauthorship graphs,
phone-call graphs, email graphs and biological networks, to cite but a few). 
These networks, typically generated directly or indirectly by human activity and
interaction (and therefore hereafter dubbed ``social''), appear in a large variety of contexts and often exhibit a surprisingly similar 
structure. One of the most important notions that researchers have been trying to capture
in such networks is ``node centrality'':
ideally, every node (often representing an
individual) has some degree of influence or importance within the social domain
under consideration, and one expects such importance to surface in the
structure of the social network; centrality is a quantitative measure that
aims at revealing the importance of a node.

Among the types of centrality that have been considered in the literature
(see~\cite{BorCNF} for a good survey), many have to do with distances
between nodes.\footnote{Here and in the
following, by ``distance'' we mean the length of a shortest path between two
nodes.} Take, for instance, a node in an undirected connected network: 
if the sum of distances to all other nodes is large, the node under consideration 
is \emph{peripheral}; this is
the starting point to define Bavelas's \emph{closeness centrality}~\cite{BavMMGS}, which is the
reciprocal of peripherality (i.e., the reciprocal of the sum of distances to all other
nodes).

The role played by shortest paths is justified by one of the most well-known
features of complex networks, the so-called \emph{small-world} phenomenon.
A small-world network~\cite{CHCNSRF} is a graph where the average distance
between nodes is logarithmic in the size of the network, whereas the clustering
coefficient is larger (that is, neighborhoods tend to be denser) than in a
random Erd\H os-R\'enyi graph with the same size and average
distance.\footnote{The reader might find this definition a bit vague, and some
variants are often spotted in the literature: this is a general well-known problem,
also highlighted recently, for example in~\cite{LADTTSFG}.} The fact that social networks 
(whether electronically mediated or not)
exhibit the small-world property is known at least since Milgram's famous
experiment~\cite{MilSWP}
% \footnote{It should be noticed that Milgram's experiment tried to prove two
% properties at the same time. First, that the average distance between
% individuals is much smaller than expected; second, that individuals are able
% to exploit such a feature to route messages along short paths, albeit they
% only possess local information about the network they live in. This second
% property is, in a sense, much more interesting than the former, but also much
% more difficult to describe and study, because it has to do with some
% information that the nodes possess about the environment they live inhabit.},
and is arguably the most popular of all features of complex networks. For instance,
the average distance of the Facebook graph was recently established to be just
$4.74$~\cite{BBRFDS}. 

% Based on the above observation that the small-world property is by far
% the most crucial of all the features that social networks exhibit, it is quite
% natural to consider centrality measures that are based on node distance, like
% closeness. On the other hand, such measures
% are often computationally too expensive to be actually computed on real-world
% graphs; for example, the best known algorithm to compute betweenness
% centrality~\cite{BraFABC} takes time $O(nm)$ and requires space for $O(n+m)$
% integers (where $n$ is the number of nodes and $m$ is the number of arcs): both
% bounds are infeasible for large networks, where typically $n \approx 10^9$ and $m
% \approx 10^{11}$.  For this reason, in most cases other strictly local measures (e.g., degree centrality)
% of centrality were usually preferred.  

The purpose of this paper is to pave the way for a formal well-grounded
assessment of centrality measures, based on some simple guiding principles; we
seek notions of centrality that are at the same time \emph{robust} (they should
be applicable to arbitrary directed graphs, possibly non-connected,
without modifications) and \emph{understandable} (they should have a clear combinatorial
interpretation).

With these principles in mind, we shall present and compare the most popular and
well-known centrality measures proposed in the last decades. The comparison will be based
on a set of \emph{axioms}, each trying to capture a specific trait.

In the last part of the paper, as a sanity check, we compare the measures we discuss
in an information-retrieval setting, extracting from the classic GOV2 web collection
documents satisfying a query and ranking by centrality the 
subgraph of retrieved documents.

The results are somehow surprising, and
suggest that simple measures based on distances, and in 
particular \emph{harmonic centrality} (which we
introduce formally in this paper) can give better results than some of the most
sophisticated indices used in the literature. These unexpected outcomes are the
main contribution of this paper, together with the set of axiom we propose,
which provide a conceptual framework for understanding centrality measures in a
formal way.  We also try to give an orderly account of centrality in
social and network sciences, gathering scattered results and folklore knowledge
in a systematic way.

\section{A Historical Account}

In this section, we sketch the historical development of centrality, focusing on 
ten classical centrality measures that we decided to include in this paper: 
the overall growth of the field is of course much more complex, and the literature 
contains a myriad of alternative proposals that will not be discussed here.

Centrality is a fundamental tool in the study of social networks:
the first efforts to define formally centrality indices were put forth in the
late 1940s by the Group Networks Laboratory at MIT~directed by Alex
Bavelas~\cite{BavMMGS}, in the framework of communication patterns and
group collaboration~\cite{LeaSECCPGP,BBEAOC}; those pioneering experiments
concluded that centrality was related to group efficiency in problem-solving, 
and agreed with the subjects' perception of leadership.
In the following decades, various measures of centrality were employed in
a multitude of contexts (to understand political integration in Indian social
life~\cite{CMNCIIC}, to examine the consequences of centrality in communication
paths for urban development~\cite{PitGTAHG}, to analyze their implications to
the efficient design of organizations~\cite{BeaIIC,MacSCCN}, or even to explain
the wealth of the Medici family based on their
central position with respect to marriages and financial
transactions in the 15th century Florence~\cite{PARARM}).
We can certainly say that the problem of singling out influential individuals
in a social group is a holy grail that sociologists have been trying to capture for at 
least sixty years. 

Although all researchers agree that centrality is an
important structural attribute of social networks and that it is directly related to
other important group properties and processes, there is no consensus on \emph{what} 
centrality 
is exactly or on its conceptual foundations, and there is very little agreement
on the proper procedures for its measurement~\cite{BurCNBC,FreCSNCC}.
It is true that often different centrality indices are designed to capture different
properties of the network under observation, but 
as Freeman observed, ``several measures are often only vaguely related to
the intuitive ideas they purport to index, and many are so complex that it is difficult 
or impossible to discover what, if anything, they are measuring''~\cite{FreCSNCC}.

Freeman acutely remarks that the implicit starting point of all centrality
measures is the same: the central node of a star should be deemed more important
than the other vertices; paradoxically, it is precisely the unanimous agreement
on this requirement that may have produced quite different approaches to the
problem. In fact, the center of a star is at the same time
\begin{enumerate}
  \item the node with largest degree;
  \item the node that is closest to the other nodes (e.g., that has the smallest average distance to other nodes);
  \item the node through which most shortest paths pass;
  \item the node with the largest number of incoming paths of length $k$, for every $k$;
  \item the node that maximizes the dominant eigenvector of the graph matrix;
  \item the node with the highest probability in the stationary distribution of the natural random walk on the graph.
\end{enumerate}
These observations lead to corresponding (competing) views of centrality. Degree is
probably the oldest measure
of importance ever used, being equivalent to majority voting in elections 
(where $x\to y$ is interpreted as ``$x$ voted for $y$'').

The most classical notion of \emph{closeness}, instead, was introduced by
Bavelas~\cite{BavMMGS} for undirected, connected networks as the reciprocal of
the sum of distances from a given node. Closeness was originally aimed at
establishing how much a vertex can communicate without relying on third parties
for his messages to be delivered.\footnote{The notion can also be generalized to
a weighted summation of node contributions multiplied by some \emph{discount} functions
applied to their distance to a given node~\cite{CoKSDAN}.} In the seventies, Nan
Lin proposed to adjust the definition of closeness so to make it usable on directed networks that
are not necessarily strongly connected~\cite{LinFSR}.

% Indices focusing on degrees (at least to some extent) were first proposed by
% Shaw~\cite{ShaGSBISG}, and later adopted by many
% authors~\cite{FMECGTSCR,MacSCCN,CzeWMPDMTI,GarCIHS}, who often found it so
% intuitively appealing that they do not bother to discuss or elaborate their
% conceptual foundations at length. Unfortunately, some of these early attempts
% are either not well formalized or turn out to be unnecessarily complicated.

Centrality indices based on the count of shortest paths were formally developed
independently by Anthonisse~\cite{AntRG} and Freeman~\cite{FreSMCBB}, who
introduced \emph{betweenness} as a measure of the probability that a random shortest
path passes through a given node or edge.

Katz's index~\cite{KatNSIDSA} is based instead on a weighted count of 
\emph{all} paths coming into a node: more precisely, the weight of a path
of length $t$ is $\beta^t$, for some \emph{attenuation factor} $\beta$,
and the score of $x$ is the sum of the weights of all paths coming into $x$.
Of course, $\beta$ must be chosen so that all the summations converge.  
% Formally, the \emph{betweenness} of a node $x$ is
% defined as
% \[
% 	\cbetweenness x = \sum_{y,z \neq x} \frac{\sigma_{yz}(x)}{\sigma_yz}
% \]
% where $\sigma_yz$ is the number of shortest paths going from $y$ to $z$, 
% $\sigma_yz(x)$ is the number of such paths that path through $x$, and the sum is
% extended to all pairs of nodes $(y,z)$ (both different from $x$) for which
% $\sigma_yz \neq 0$.

% In both cases, the sum is extended to all $y$'s that give a finite contribution
% (for which $d(x,y) \neq \infty$ or $d(y,x) \neq \infty$, respectively): this
% adjustment is necessary, because otherwise the closeness would be zero for all
% nodes that cannot reach (be reached by, resp.) every other node. This adjustment
% introduces a bias towards node living in small components, an effect that we may
% call ``big in Japan'': it is easy to be close to everybody if you lie in a small
% community!

% As an attempt to take this effect into account, Lin~\cite{LinFSR} introduced a
% measure (curiously neglected in the later years) that we refer to as
% \emph{(positive/negative) Lin} centrality:
% \[
% 	\clinp x = \frac{\creachp x^2}{\sum_{y \neq x} d(x,y)}
% 	\qquad
% 	\clinn x = \frac{\creachn x^2}{\sum_{y \neq x} d(y,x),}
% \]
% where $\creachp x$ ($\creachn x$, respectively) is the number of nodes that can
% be reached from $x$ (that can reach $x$, respectively).

While the above notions of centrality are combinatorial in nature and based on
the discrete structure of the underlying graph,
another line of research studies \emph{spectral} techniques (in the sense of linear
algebra) to define centrality~\cite{VigSR}.

The earliest known proposal of this kind is due to Seeley~\cite{SeeNRI}, who normalized
the rows of an adjacency matrix representing the ``I like him'' relations
among a group of children, and assigned a centrality score using the resulting left dominant
eigenvector. This approach is equivalent to studying the stationary distribution
of the Markov chain defined by the natural random walk on the graph. Few years later, 
Wei~\cite{WeiAFRT} proposed to rank sport teams using the right dominant
eigenvector of a tournament matrix, which contains 1 or 0 depending on whether a team
defeated another team.
Wei's work was then popularized by Kendall~\cite{KenFCTPC}, and the technique is 
known in the literature about ranking of sport teams as ``Kendall--Wei
ranking''.  Building on Wei's approach, Berge~\cite{BerTGA}
generalized the usage of dominant eigenvectors to arbitrary directed graphs, and in particular
to \emph{sociograms}, in which an arc represents a relationship of influence between two individuals.\footnote{Dominant eigenvectors were rediscovered as a generic way of computing centralities on graphs by
Bonacich~\cite{BonFWASSCI}.}

Curiously enough, the most famous among spectral centrality scores is also 
one of the most recent, PageRank~\cite{PBMPCR}:
%. Although there are many
%equivalent ways to define this centrality measure (see, e.g.,~\cite{BSVDIPFDF}),
%the easiest one is the following: let $A$ denote the adjacency matrix of the
%graph under consideration, and $\bar A$ be its row-normalized
%version (as in the definition of Seeley's index); then the PageRank 
%vector is the dominant eigenvector of
%\[
%	\alpha \bar{A} + (1-\alpha)\1^T \1/n.
%\]
%Also this matrix represent a Markov chain, so also PageRank can be seen as a stationary distribution.
%Note that the ranking depends on the choice of $\alpha \in [0\..1)$, the \emph{damping factor} 
%(this dependence is discussed at length in~\cite{BSVDIPFDF}). The fact that
%there is a unique dominant eigenvector for every $\alpha<1$ is easily seen and
%is actually one of the reasons why the rank-one correction above is introduced.\footnote{We will see in Section~\ref{sec:defs} that
%rankings based on linear algebra can be often redefined in terms of paths.}
%
PageRank was a centrality measure specifically geared toward web graphs, and it
was introduced precisely with the aim of implementing it in a search engine
(specifically, Google, that the inventors of PageRank founded in 1997). 

In the same span of years, Jon Kleinberg defined another centrality measure
called HITS~\cite{KleASHE} (for
``Hyperlink-Induced Topic Search''). The idea\footnote{To be precise, Kleinberg's
algorithm works in two phases; in the first phase, one selects a subgraph of the
starting webgraph based on the pages that match the given query; in the second
phase, the centrality score is computed on the subgraph. Since in this paper we
are looking at HITS simply as a centrality index, we will simply apply it to the
graph under examination.} is that every node of a graph is associated with two
importance indices: one (called ``authority score'') measures how reliable
(important, authoritative\dots) a node is, and another (called ``hub score'')
measures how good the node is in pointing to authoritative nodes, with the two
scores mutually reinforcing each other. The result is again the dominant
eigenvector of a suitable matrix.
SALSA~\cite{LeMS} is a more recent and strictly related score based on the same
idea, with the difference that it applies some normalization to the matrix.

We conclude this brief historical account mentioning that there were in the past some attempts to 
axiomatize the notion of centrality: we postpone a discussion on these attempts
to Section~\ref{sec:histax}.

\section{Definitions and conventions}
\label{sec:defs}

In this paper, we consider directed graphs defined by a set $N$ of $n$ nodes and a set
$A\subseteq N\times N$ of arcs; we write $x\to y$ when $\langle x,y\rangle \in A$ and call
$x$ and $y$ the source and target of the arc, respectively. 
An arc with the same source and target is called a \emph{loop}. 

The \emph{transpose} of a graph is obtained by reversing all arc directions (i.e., it has an arc $y\to x$
for every arc $x\to y$ of the original graph).
A \emph{symmetric graph} is a graph such that $x\to y$ whenever $y\to x$; such a
graph is fixed by transposition, and can be identified with an undirected graph, that is, a graph whose arcs (usually called \emph{edges}) are a
subset of unordered pairs of nodes. 
A \emph{successor} of $x$ is a node $y$ such that $x\to y$, and a \emph{predecessor} of $x$ is a node $y$ such that $y\to x$.
The \emph{outdegree} $d^+(x)$ of a node $x$ is the number of its successors, and the
\emph{indegree} $d^-(x)$ is the number of its predecessors.

A \emph{path} (of length $k$) is a sequence \lst xk, where $x_j\to x_{j+1}$,
$0\leq j <k$.
A \emph{walk} (of length $k$) is a sequence \lst xk, where $x_j\to x_{j+1}$ or
$x_{j+1}\to x_j$, $0\leq j <k$.
A connected (strongly connected, respectively) \emph{component} of a graph is a maximal subset in which
every pair of nodes is connected by a walk (path, respectively). Components form a partition of the nodes of a graph.
A graph is \emph{(strongly) connected} if there is a single (strongly) connected component,
that is, for every choice of $x$ and $y$ there is a walk (path) from $x$ to $y$. A strongly connected
component is \emph{terminal} if its nodes have no arc towards other components.

The \emph{distance} $d(x,y)$ from $x$ to $y$ is the length of a shortest path from $x$ to $y$,
or $\infty$ if no such path exists. The nodes \emph{reachable} from $x$ are the nodes $y$ such that $d(x,y)<\infty$.
The nodes \emph{coreachable} from $x$ are the nodes $y$ such that $d(y,x)<\infty$. A node
has \emph{trivial} (co)reachable set if the latter contains only the node itself.

The notation $\bar A$, where $A$ is a nonnegative matrix, will be used throughout the paper to denote the matrix obtained by
$\ell_1$-normalizing the rows of $A$, that is, dividing each element of a row by the sum of the row (null rows are left unchanged).
If there are no null rows, $\bar A$ is (row-)\emph{stochastic}, that is, it is nonnegative and the row sums are all equal to one.

We use Iverson's notation: if $P$ is a predicate, $[P]$ has value $0$ if $P$ is false and $1$ if $P$ is true~\cite{KnuTNN};
we denote with $H_i$ the $i$-th harmonic number $\sum_{1\leq k\leq i}1/k$; finally, $\bm \chi_i(j)=[j=i]$ is the characteristic
vector of $i$. We number graph nodes and corresponding vector coordinates starting from zero.

% \subsection{A note on directions}
% Several centrality measures that were proposed in the literature were
% actually described only for undirected, connected graphs. Recently, the study of
% web graphs and online social networks has posed the problem of extending
% centrality concepts based to shortest paths to networks that are directed, and
% possibly not strongly connected.
% 
% In the rest of this paper we consider measures depending on the \emph{incoming}
% arcs of a node (e.g., incoming paths, left dominant eigenvectors, distances from all nodes
% to a fixed node), but each measure can be redefined so to consider \emph{outgoing} arcs.
% 
% A reasonable
% naming for the two concept is that of \emph{negative} and \emph{positive} version of a measure (following
% the common notation $d^-(x)$ vs.~$d^+(x)$ for the indegree vs.~outdegree of a node). To obtain the opposite
% version of a measure it is sufficient to transpose the underlying graph. Note that in general
% the negative version has a clearer meaning, as an arc usually represent some kind of approval
% or endorsement (like a web link, or a vote in an election), but in the literature many
% positive versions have been actually studied.
% 
% For each measure, it is also possible to define a global measure as a combination (typically the mean) of the positive
% and negative version. We will not pursue this issue further, but we will recall when necessary
% how to translate our results from the negative to the positive version. (Of course, on symmetric
% graphs the two versions coincide.)

\subsection{Geometric measures}

We call \emph{geometric} those measures assuming that
importance is a function of distances; more precisely, a geometric centrality depends only on how many nodes
exist at every distance. These are some
of the oldest measures defined in the literature.  

\subsubsection{Indegree}
Indegree, the number of incoming arcs $d^-(x)$, can be considered a geometric measure: it
is simply the number of nodes at distance one\footnote{Most centrality measures proposed in the literature were
actually described only for undirected, connected graphs. Since the study of
web graphs and online social networks has posed the problem of extending
centrality concepts to networks that are directed, and possibly not strongly connected,
in the rest of this paper we consider measures depending on the \emph{incoming}
arcs of a node (e.g., incoming paths, left dominant eigenvectors, distances from all nodes
to a fixed node). If necessary, these measures can be called ``negative'', as opposed
to the ``positive'' versions obtained by considering outgoing paths, or (equivalently) by transposing the graph.}. It is probably the oldest measure
of importance ever used, as it is equivalent to majority voting in elections (where $x\to y$ if $x$ voted for $y$).
Indegree has a number of obvious shortcomings (e.g., it is easy to spam), but it is a good baseline,
and in some cases turned out to provide better results than more sophisticated methods (see, e.g.,~\cite{CHUPFFPI}).

\subsubsection{Closeness}
Bavelas introduced closeness in the late forties~\cite{BavCPTOG}; the closeness of $x$ is defined by 
\begin{equation}
\label{eq:closeness}
	 \frac1{\sum_y d(y,x)}.
\end{equation}
The intuition behind closeness is that nodes that are more central have smaller distances, and thus a smaller
denominator, resulting in a larger centrality.
We remark that for this definition to make sense, the graph must be strongly
connected. Lacking that condition, some of the denominators will be $\infty$, 
resulting in a null score for all nodes that cannot coreach the whole graph.

It was probably not in Bavelas's intentions to apply the measure to directed graphs, and even less to 
graphs with infinite distances,
but nonetheless closeness is sometimes ``patched'' by simply not including unreachable nodes, that is,
\[
\frac1{\sum_{d(y,x)<\infty} d(y,x)},
\]
and assuming that nodes with an empty coreachable set have centrality $0$ by definition: this is actually
the definition we shall use in the rest of the paper.
These apparently innocuous adjustments, however, introduce a strong bias toward nodes with a small coreachable
set.
% , an effect that we may
% call ``big in Japan'': it is easy to be close to everybody if you lie in a small
% community.

\subsubsection{Lin's index}
Nan Lin~\cite{LinFSR} tried to repair the definition of closeness for graphs
with infinite distances by weighting closeness using the
square of the number of coreachable nodes; his definition for the centrality of a node $x$ with a nonempty coreachable set is
\[
	\frac{\bigl|\{y\mid d(y,x)<\infty\}\bigr|^2}{\sum_{d(y,x)<\infty} d(y,x)}.
\]
The rationale behind this definition is the following: first, we consider closeness not the inverse
of a sum of distances, but rather the inverse of the \emph{average} distance, which entails a first
multiplication by the number of coreachable nodes. This change normalizes closeness across the
graph. Now, however, we want nodes with a larger coreachable set to be more important, given
that the average distance is the same, so we multiply again by the number of coreachable nodes.
Nodes with an empty coreachable set have centrality $1$ by definition.

Lin's index was (somewhat surprisingly) ignored in the following literature. Nonetheless, it
seems to provide a reasonable solution for the problems caused by the definition of closeness.

\subsubsection{Harmonic centrality}
As we noticed, the main problem of closeness lies in the presence of pairs of unreachable nodes. 
We thus get inspiration from Marchiori
and Latora~\cite{MaLHSW}: faced with the problem of providing a sensible
notion of ``average shortest path'' for a generic directed network, they propose to
replace the average distance with the \emph{harmonic mean of all distances} (i.e., the $n(n-1)$ distances
between every pair of distinct nodes).
Indeed, in case a large number of pairs of nodes are not reachable, the average
of finite distances can be misleading: a graph might have a very
low average distance while it is almost completely disconnected (e.g., a
perfect matching has average distance $1/2$). The harmonic mean has the
useful property of handling $\infty$ cleanly (assuming, of course, that
$\infty^{-1}=0$).
For example, the harmonic mean of distances of a perfect matching is $n-1$: in fact, for every node there is
exactly another node at a non-infinite distance, and its distance is 1; so the sum of the inverse of all distances
is $n$, making the harmonic average equal to $n(n-1)/n=n-1$.

In general, for each graph-theoretical notion based on arithmetic averaging or
maximization there is an equivalent notion based on the harmonic mean. If we consider closeness the reciprocal of a
denormalized average of distances, it is natural to consider also the reciprocal of a
denormalized harmonic mean of distances.
We thus define the \emph{harmonic
centrality} of $x$ as\footnote{We remark that Yannick Rochat, in a talk at ASNA 2009 (Application of Social
Network Analysys),  
observed that in an undirected graph with several disconnected components the inverse of the harmonic mean of distances
offers a better notion of centrality than closeness, as it weights less those elements that belong to
smaller components. Tore Opsahl made the same observation in a March 2010 blog posting. 
Raj Kumar Pan and Jari Saram\"aki deviated from the classical definition of closeness in~\cite{PaSPLCCTN},
using in practice harmonic centrality, with the motivation of better handling disconnected nodes (albeit with
no reference to the harmonic mean). Edith Cohen and Haim Kaplan defined in~\cite{CoKSDAN} the notion
of a \emph{spatially decaying aggregate}, of which harmonic centrality is a particular instance.} 
\begin{equation}
\label{eq:harmonic}
	\sum_{y\neq x}\frac1{d(y,x)} = \sum_{d(y,x)<\infty, y\neq x}\frac1{d(y,x)}.
\end{equation}
The difference with~(\ref{eq:closeness}) might seem minor, but actually it
is a radical change. Harmonic centrality is strongly correlated to closeness centrality in simple networks, but naturally also accounts for 
nodes $y$ that cannot reach
$x$. Thus, it can be fruitfully applied to graphs that are not strongly connected. 

\subsection{Spectral measures}

\emph{Spectral measures} compute the left dominant eigenvector of some matrix derived from the graph, and
depending on how the matrix is modified before the computation we can obtain a number of different
measures. Existence and uniqueness of such measures is usually derivable by the theory of
nonnegative matrices started by Perron and Frobenius~\cite{BePNMMS}; we will not, however,
discuss such issues, as there is a large body of established literature about the topic. All 
observations in this section are true for strongly connected graphs; the modifications for
graphs that are not strongly connected can be found in the cited references.

\subsubsection{The left dominant eigenvector}
The first and most obvious spectral measure is the left dominant eigenvector of the plain adjacency
matrix. Indeed, the dominant eigenvector can be thought as the fixed point of an iterated
computation in which every node starts with the same score, and then replaces its score
with the sum of the scores of its predecessors. The vector is then normalized, and the process repeated
until convergence.

Dominant eigenvectors fail to behave as expected on graphs that are not strongly connected. Depending on the
dominant eigenvalue of the strongly connected components, the dominant eigenvector might
or might not be nonzero on non-terminal components (a detailed characterization can be found in~\cite{BePNMMS}).

\subsubsection{Seeley's index}
The dominant eigenvector rationale can be slightly amended with the observation that
the update rule we described can be thought of as if each node gives away its score
to its successors: or even that each node has a \emph{reputation} and is giving
its reputation to its successors so that they can build their own. 

Once we take this viewpoint, it is clear that it is not very
sensible to give away our own amount of reputation to everybody: it is more reasonable
to \emph{divide} equally our reputation among our successors. From a linear-algebra
viewpoint, this corresponds to normalizing each row of the adjacency matrix using the $\ell_1$ norm.
%, or, more
%explicitly, divide each row of the adjacency matrix by the outdegree of the corresponding node.\footnote{We are assuming here 
%that there are no rows entirely made of zeroes.}

Seeley advocated this approach~\cite{SeeNRI} for computing the popularity among groups
of children, given a graph representing whether each child liked or not another one.
The matrix resulting from the $\ell_1$-normalization process is stochastic,
so the score can be interpreted as the stationary state of a Markov
chain. In particular, if the underlying graph is symmetric then Seeley's index collapses 
to degree (modulo normalization) because of the well-known characterization of the
stationary distribution of the natural random walk on a symmetric graph.

Also Seeley's index does not react very well to the lack of strong connectivity:
the only nodes with a nonzero score are those belonging to terminal components
that are not formed by a single node of outdegree zero.

\subsubsection{Katz's index}
Katz introduced his celebrated index~\cite{KatNSIDSA} using a summation over all
paths coming into a node, but weighting each path so that the summation would 
be finite. Katz's index can be expressed as
\[
\bm k = \mathbf 1\sum_{i=0}^\infty \beta^iA^i,
\]
due to the interplay between the powers of the adjacency matrix and the number of
paths connecting two nodes.
For the summation above to be finite, the \emph{attenuation factor} $\beta$ 
must be smaller than $1/\lambda$, where $\lambda$ is the dominant eigenvalue of $A$.

Katz immediately noted that the index was expressible using linear algebra
operations:
\[ \bm k = \mathbf 1( 1 - \beta A ) ^ {-1} .
\] It took some more time to realize that, due to Brauer's theorem on the
displacement of eigenvalues~\cite{BraLCRMIV}, Katz's index is the left dominant
eigenvector of a \emph{perturbed matrix}
\begin{equation}
\label{eq:katzpert}
\beta\lambda A + ( 1 - \beta\lambda ) \bm e^T \mathbf 1,
\end{equation}
where $\bm e$ is a right dominant eigenvector of $A$ such that $\mathbf 1\bm e^T=\lambda$~\cite{VigSR}. An easy generalization
(actually suggested by Hubbell~\cite{HubIOACI}) replaces the vector $\mathbf 1$ with 
some preference vector $\bm v$ so that paths are also weighted differently depending on their
starting node.\footnote{We must note that the original definition of Katz's index is 
$\mathbf 1A\sum_{i=0}^\infty \beta^iA^i=\mathbf 1/\beta\sum_{i=0}^\infty \beta^{i+1}A^{i+1}=
(\mathbf 1/\beta)\sum_{i=0}^\infty \beta^iA^i - \mathbf 1/\beta$. This additional multiplication by
$A$ is somewhat common in the literature, even for PageRank; clearly, it alters the order induced by the scores
only when there is a nonuniform preference vector. Our discussion 
can be easily adapted for this version.} 

The normalized limit of Katz's index when $\beta\to 1/\lambda$ is a dominant eigenvector;
if the dominant eigenvector is not unique, the limit depends on the preference vector $\bm v$~\cite{VigSR}.

%\frameit{Seba}{Find reference with theorems about the semisimple condition}

\subsubsection{PageRank}
PageRank is one of the most discussed and quoted spectral indices in use today, mainly because of
its alleged use in Google's ranking algorithm.\footnote{The reader should be aware, however, that the literature about the actual 
effectiveness of PageRank in information retrieval is rather scarce, and comprises mainly negative results such 
as~\cite{NZTHW} and~\cite{CHUPFFPI}.}
By definition, PageRank is the unique vector $\bm p$ satisfying 
\begin{equation}
\label{eq:prlin}
\bm p = \alpha \bm p \bar A + (1-\alpha)\bm v,
\end{equation}
where $\bar A$ is again the $\ell_1$-normalized adjacency matrix of the graph, $\alpha\in[0\..1)$ is a
\emph{damping factor}, and $\bm v$ is a \emph{preference
vector} (which must be a distribution).
This is the definition appearing in Brin and Page's paper about Google~\cite{BrPALHWSE}; 
the authors claim that the PageRank score is a probability distribution
over web pages, that is, it has unit $\ell_1$ norm, but this is not necessarily true if $A$ has null rows. 
The following scientific literature
thus proposed several ways to \emph{patch} the matrix $\bar A$ to make it stochastic, which would guarantee
$\|\bm p\|_1=1$. 
A common solution is to replace every null row with the preference vector $\bm v$ itself,
but other solutions have been proposed (e.g., adding a loop to all nodes of outdegree zero), leading to different scores.
This issue is definitely not academic, as in typical web snapshots a significant fraction of the nodes have
outdegree zero (the so-called \emph{dangling nodes}).

It is interested to note, however, that in the preprint
written in collaboration with Motwani and Winograd and commonly quoted as defining PageRank~\cite{PBMPCR}, Brin and Page themselves propose a different 
but essentially equivalent linear recurrence in the style of Hubbell's index~\cite{HubIOACI},
and acknowledge that $\bar A$ can have null rows, in which case the dominant eigenvalue of $\bar A$
could be smaller than one, and the solution might need to be normalized to have unit $\ell_1$ norm. 

Equation~(\ref{eq:prlin}) is of course solvable even without any patching, giving
\begin{equation}
\label{eq:prinv}
\bm p = (1-\alpha) \bm v\bigl( 1 - \alpha \bar A \bigr) ^ {-1} ,
\end{equation}
and finally
\begin{equation}
\label{eq:prsum}
\bm p = (1-\alpha) \bm v\sum_{i=0}^\infty \alpha^i\bar A^i,
\end{equation}
which shows immediately that Katz's index and PageRank differ only by a constant factor and by the $\ell_1$ normalization applied
to $A$, similarly to the difference between the dominant eigenvector and Seeley's index.

If $A$ has no null rows, or $\bar A$ has been patched to be stochastic, PageRank can be equivalently
defined as the stationary distribution (i.e., the left dominant eigenvector) of the Markov chain with
transition matrix
\begin{equation}
\label{eq:prmarkov}
\alpha \bar A + ( 1 - \alpha ) \1^T \bm v,
\end{equation}
which is of course analogous to~(\ref{eq:katzpert}).
Del Corso, Gull\`i and Romani~\cite{DCGRFPCSLS} have shown that
the resulting scores (which have always unit $\ell_1$ norm) differ from the PageRank vector
defined by~(\ref{eq:prlin}) only by a normalization factor, provided that in~(\ref{eq:prmarkov}) null rows have been replaced with $\bm v$.
If $A$ had no null rows, the scores are of course identical as~(\ref{eq:prinv}) can be easily derived from~(\ref{eq:prmarkov}). 

Both definitions have been used in the literature: the linear-recurrence definition~(\ref{eq:prlin})
is particularly useful when one needs linear dependence on $\bm v$~\cite{BSVPFDF}. The Markov-chain
definition~(\ref{eq:prmarkov}) is nonetheless more common, albeit it raises the issue of patching null rows.
  
Analogously to what happens with Katz's index, if the dominant eigenvalue of
$\bar A$ is one then the limit of PageRank when $\alpha$ goes to 1 is a dominant
eigenvector of $\bar A$, that is, Seeley's index, and if the dominant
eigenvector is not unique the limit depends on the preference vector $\bm
v$~\cite{BSVPFD}.

In the rest of this paper, except when explicitly stated, we shall define PageRank as in equation (\ref{eq:prinv}), 
without any patching or normalization, and use a uniform preference vector.

\subsubsection{HITS}
Kleinberg introduced his celebrated HITS algorithm~\cite{KleASHE}\footnote{A few years before, Bonacich~\cite{BonSGIC} had introduced
the same method for computing the centrality of individuals and groups related by a
rectangular incidence matrix. Kleinberg's approach is slightly more general, as it does not assume that the two scores
are assigned to different types of entities.} using the web metaphor of ``mutual reinforcement'':
a page is authoritative if it is pointed by many good \emph{hubs}---pages which contain good list of authoritative pages---,
and a hub is good if it points to authoritative pages. This suggests an iterative process that computes
at the same time an authoritativeness score $\bm a_i$ and a ``hubbiness'' score $\bm h_i$ starting with
$\bm a_0=\mathbf 1$, and then applying the update rule
\begin{align*}
\bm h_{i+1} &= \bm a_i A^T	\\
\bm a_{i+1} &= \bm h_{i+1} A. 
\end{align*}
This process converges to the left dominant eigenvector of the matrix $A^TA$, which gives the final authoritativeness score,
and that we label with ``HITS'' throughout the paper.\footnote{As discussed
in~\cite{FLMRWARHPSEUEI}, the dominant eigenvector may not be unique;
equivalently, the limit of the recursive definition given above may depend on
the way the authority and hub scores are initialized. Here, we consider the result of the iterative process starting with $\bm a_0=\mathbf 1$.}

Inverting the process, and considering the left dominant eigenvector of the matrix $AA^T$, gives the final hubbiness score.
The two vectors are the left and right \emph{singular vectors} associated with the largest \emph{singular value}
in the singular-value decomposition of $A$. 
Note also that hubbiness is the positive version of authoritativeness.

\subsubsection{SALSA}
Finally, we consider SALSA, a measure introduced by Lempel and Moran~\cite{LeMS} always using the
metaphor of mutual reinforcement between authoritativeness and hubbiness, but $\ell_1$-normalizing the matrices $A$ and $A^T$.
We start with $\bm a_0=\mathbf 1$ and proceed with   
\begin{align*}
\bm h_{i+1} &= \bm a_i \overline{A^T}	\\
\bm a_{i+1} &= \bm h_{i+1} \bar A. 
\end{align*}
We remark that this normalization process is analogous to the one that brings us
from the dominant eigenvector to Seeley's index, or from Katz's index to PageRank.

Similarly to what happens with Seeley's index on
symmetric graphs, SALSA does not need such an iterative process to be computed.\footnote{This
property, which appears to be little known, is proved in Proposition 2 of the
original paper~\cite{LeMS}.} First, one computes the connected components of the symmetric graph
induced by the matrix $A^TA$; in this graph, $x$ and $y$ are
adjacent if $x$ and $y$ have some common predecessor in the
original graph.
Then, the SALSA score of a node is the ratio between its indegree and the sum of the
indegrees of nodes in the same component, multiplied by the ratio between
the component size and $n$.
Thus, contrarily to HITS, a single linear scan of the graph is sufficient to
compute SALSA, albeit the computation of the intersection graph requires time
proportional to $\sum_x d^+(x)^2$.

\subsection{Path-based measures}

Path-based measures exploit not only the existence of shortest paths but actually take
into examination all shortest paths (or all paths) coming into a node. We remark that
indegree can be considered a path-based measure, as it is the equivalent to the number of incoming
paths of length one.

\subsubsection{Betweenness}
\emph{Betweenness centrality} was introduced by Anthonisse~\cite{AntRG} for edges, 
and then rephrased by Freeman for nodes~\cite{FreSMCBB}. The idea is to measure
the probability that a random shortest path passes through a given node: if
$\sigma_{yz}$ is the number of shortest paths going from $y$ to $z$, 
and $\sigma_{yz}(x)$ is the number of such paths that pass through $x$, we define the
\emph{betweenness} of $x$ as 
\[
	\sum_{y,z \neq x,\sigma_{yz}\neq 0} \frac{\sigma_{yz}(x)}{\sigma_{yz}}.
\]
The intuition behind betweenness is that if a large fraction of shortest paths
passes through $x$, then $x$ is an important junction point of the network.
Indeed, removing nodes in betweenness order causes a very quick disruption
of the network~\cite{BRVRSWGNR}. 

\subsubsection{Spectral measures as path-based measures}
It is a general observation that all spectral measures can be interpreted as path-based measures
as they depend on taking the limit of some summations of powers of $A$, or on the limit of powers of $A$,
and in both cases we can express these algebraic operations in terms of suitable paths. 

For instance, the left dominant eigenvector of a nonnegative matrix can be
computed with the power method by taking the limit of $\mathbf 1 A^k/\|\mathbf 1 A^k\|$ for $k\to
\infty$. Since, however, $\mathbf 1A^k$ is a vector associating with each node the number of 
paths of length $k$ coming into the node, we can see that dominant eigenvector expresses
the relative growth of the number of paths coming into each node as their length increases.

Analogously, Seeley's index can be computed (modulo a normalization factor) by taking the limit of $\mathbf 1\bar A^k$ (in this
case, the $\ell_1$ norm cannot grow, so we do not need to renormalize at each iteration). The vector 
$\mathbf 1\bar A^k$ has the following combinatorial interpretation: it assigns to each $x$ the sums of the \emph{weights}
of the paths coming into $x$, where the weight of a path
$x_0$,~$x_1$, $\dots\,$, $x_t$ is
\begin{equation}
\label{eq:weight}
\prod_{i=0}^{t-1}\frac1{d^+(x_i)}.
\end{equation}

When we switch to attenuated versions of the previous indices (that is, Katz's index and PageRank), we
switch from limits to infinite summations and at the same time multiply the weight of paths of length $t$ by $\beta^t$ or $\alpha^t$.
Indeed, the Katz index of $x$ was originally defined as the summation over all $t$ of the number of paths of length $t$ coming into
$x$ multiplied by $\beta^t$, and PageRank is the summation over all paths coming into $x$ of the
weight~(\ref{eq:weight}) multiplied by $\alpha^t$.

The reader can easily work out similar definitions for
HITS and SALSA, which depend on a suitable definition of alternate ``back-and-forth path'' (see, e.g.,~\cite{BRRLAR})

\section{Axioms for Centrality}
\label{sec:axioms}

The comparative evaluation of centrality measures is a challenging, difficult,
arduous task, for many different reasons. The datasets that are classically used
in social sciences are very small (typically, some tens of nodes), and it is hard
to draw conclusions out of them. Nonetheless, some attempts in this direction were put forth
(e.g.,~\cite{SteRCME}); sometimes, the attitude was actually to provide evidence
that different measures highlight different kinds of centralities and are,
therefore, equally incomparably interesting~\cite{FriTFCM}. Whether the latter
consideration is the only sensible conclusion or not is debatable. While it is clear
that the notion of centrality, in its vagueness, can be interpreted differently
giving rise to many good but incompatible measures, we will provide evidence
that some measures tend to reward nodes that are in no way central.

If results obtained on small corpora may be misleading, a comparison on larger
corpora is much more difficult to deal with, due to the lack of ground truth and
to the unavailability of implementations of efficient algorithms to compute the
measures under consideration (at least in cases where efficient, possibly
approximate, algorithms do exist).
Among the few attempts that try a comparison on large networks we
cite~\cite{UCHQIEHPF} and~\cite{NZHT}, that nevertheless focus only on web
graphs and on a very limited number of centrality indices.

In this paper, we propose to understand (part of) the behavior of a centrality measure
using a set of axioms. While, of course, it is not sensible to prescribe a set
of axioms that \emph{define} what centrality should be (in the vein of Shannon's
definition of entropy~\cite{ShaMTC} or Altman and Tennenholtz axiomatic
definition of Seeley's index~\cite{AlTRSPA}\footnote{The authors claim to
formalize PageRank~\cite{PBMPCR}, but they do not consider the damping factor
(equivalently, they are setting $\alpha=1$), so they are actually formalizing
Seeley's venerable index~\cite{SeeNRI}.}), as different indices serve different
purposes, it is reasonable to set up some \emph{necessary} axioms that an index
should satisfy to behave predictably and follow our intuition.

The other interesting aspect of defining axioms is that, even if one does not believe
they are really so discriminative or necessary, they provide a very specific, formal, provable piece
of information about a centrality measure that is much more precise than folklore intuitions
like ``this centrality is really correlated to indegree'' or ``this centrality is really fooled by cliques''.
We believe that a theory of centrality should provide exactly this kind of compact,
meaningful, reusable information (in the sense that it can be used to prove other properties).
This is indeed what happens, for example, in topology,
where the information that a space is $T_0$, rather than $T_1$, is a compact way
to provide a lot of knowledge about the structure of the space.

Defining such axioms is a delicate matter. First of all, the semantics of the axioms must be very clear. Second,
the axioms must be evaluable in an exact way on the most common centrality measures.
Third, they should be formulated avoiding the trap of small, finite (counter)examples, on which many centrality measures collapse 
(e.g., using an asymptotic definition as suggested by Lempel and Moran for rank \emph{similarity} and \emph{stability}~\cite{LeMRSRS}).
We assume from the beginning that
the centrality measures under examination are invariant by isomorphism, that is, that they depend just on the 
structure of the graph, and not on particular labeling chosen for each node.

To meet these constraints, we propose to study the reaction of centrality
measures to \emph{change of size}, to \emph{(local) change of density} and 
to \emph{arc additions}. We expect that nodes
belonging to larger groups, when every other parameter is fixed, should be more
important, and that nodes with a denser neighborhood (i.e., having more
friends), when every other parameter is fixed, should also be more important. We also
expect that adding an arc should increase the importance of the target.

How can we actually determine if this happens in an exact way, and possibly in an asymptotic setting?
To do so, we need to try something entirely new---evaluating \emph{exactly} (i.e., in algebraic closed form) all measures of
interest on all nodes of some representative classes of networks.
%: for instance, spectral
%centrality indices are usually computed by iterative methods, and while on finite examples algebra packages can provide
%exact formulae, doing the same on a parametric example might seem at first sight out of reach.

%We are now ready to introduce our axioms.

A good approach to reduce the amount of computation is using
strongly connected \emph{vertex-transitive}\footnote{A graph is vertex-transitive if for every nodes $x$ and $y$ there is an automorphism
exchanging $x$ and $y$.} graphs as basic building blocks: these graphs 
exhibit a high degree of symmetry, which
should entail a simplification of our computations. Finally, since we want to compare
density, a natural choice is to pick the \emph{densest} strongly connected
vertex-transitive graph, the clique, and the \emph{sparsest} strongly connected,
the directed cycle.
Choosing two graphs at the extreme of the density spectrum should better 
highlight the reaction of centrality measures to density.
Moreover, $k$-cliques and directed $p$-cycles exist for every $k$ and
$p$ (this might not happen for more complicated structures, e.g., a cubic
graph).

\subsection{The size axiom}

Let us consider a graph made by a $k$-clique and a $p$-cycle (see the figure in Table~\ref{tab:cliquecycle}).\footnote{The graph is of course disconnected. It is a common
theme of this work that centrality measures should work also on graphs that are not strongly connected, for the
very simple reason that we meet this kind of graphs in the real world, the web being a common example.} 
Because of invariance by isomorphism, all nodes of the clique have equal score,
and all nodes of the cycle have equal score, too; but which nodes are
more important? Probably everybody would answer that if $p=k$ the elements on the clique are more important, and
indeed this axiom is so trivial that is satisfied by almost any measure of which we are aware, but we
are interested in assessing the sensitivity to \emph{size},\footnote{Lempel and Moran~\cite{LeMS} while designing their own centrality measure observe that
``the size of the community should be considered when evaluating the quality of the top pages in that community''.} and thus we state our first axiom as follows:

\begin{definition}[Size axiom]
\label{def:size}
Consider the graph $S_{k,p}$ made by a $k$-clique and a directed $p$-cycle (Table~\ref{tab:cliquecycle}).
A centrality measure satisfies the \emph{size axiom} if for every $k$ there is a $P_k$ such 
that for all $p\geq P_k$ in $S_{k,p}$ the centrality of a node of the $p$-cycle is strictly larger than
the centrality of a node of the $k$-clique, and if for every $p$ there is a $K_p$ such 
that for all $k\geq K_p$ in $S_{k,p}$ the centrality of a node of the $k$-clique is strictly larger than
the centrality of a node of the $p$-cycle.
\end{definition}

Intuitively, when $p=k$ we do expect nodes of the cycle to be less important
than nodes of the clique. The rationale behind the case $k\to\infty$ is
rather clear: the denser community is also getting larger, and thus its members
are expected to become ultimately even more important.

On the other hand, if the cycle becomes very large (more precisely, when its
size goes to infinity), its nodes are still part of a very large (albeit badly
connected) community, and we expect them to achieve ultimately greater
importance than the nodes of a fixed-size community, no matter how dense it can be.

Since one might devise some centrality measures that satisfy the size axiom for
$p$ and not for $k$, stating both properties in Definition~\ref{def:size} gives
us a finer granularity and avoids pathological cases.

% The rationale behind this axiom is that we check for sensitivity to size in
% an adversarial setting. We set up a part of the network that is as dense as possible, but of fixed size,
% and let another part of the network that is as sparse as possible grow
% arbitrarily. At some point, if the centrality measure is sensible to size it
% must rank the nodes of the sparser component higher than those of the denser
% component. If this never happens, the measure is not able to detect the
% importance of the larger component.

\subsection{The density axiom}

Designing an axiom for density is a more delicate issue, since we must be able to
define an increase of density ``with all other parameters fixed'', including
size. Let us start ideally from a graph made
by a directed $k$-cycle and a directed $p$-cycle, and connect a node $x$ of the
$k$-cycle with a node $y$ of the $p$-cycle through a bidirectional
arc, the \emph{bridge}. If $k=p$, the vertices $x$ and $y$ are symmetric, and
thus must necessarily have the same score. Now, we increase the density of the
$k$-cycle as much as possible, turning it into a $k$-clique (see the figure in
Table~\ref{tab:cliquecyclexyyx}). Note that this change of density is local to
$x$, as the degree of $y$ has not changed. We are thus \emph{strictly increasing
the local density around $x$, leaving all other parameters fixed}, and in these
circumstances we expect that the score of $x$ increases.

\begin{definition}[Density axiom]
\label{def:density}
Consider the graph $D_{k,p}$ made by a $k$-clique and a $p$-cycle ($p,k\geq 3$) connected by a bidirectional bridge
$x \leftrightarrow y$, where $x$ is a node of the clique and $y$ is a node of the cycle (Table~\ref{tab:cliquecyclexyyx}).
A centrality measure satisfies the \emph{density axiom} if for $k=p$ the centrality
of $x$ is strictly larger than the centrality of $y$.
\end{definition}

Note that our axiom does not specify any constraint when $k\neq p$. While
studying the behavior of the graph $D_{k,p}$ of the previous definition when
$k\neq p$ shades some lights of the inner behavior of centrality measures, it
is essential, in an axiom asserting the sensitivity to density, that size is not
involved.

In our proofs for the density axiom, we actually let $k$ and $p$ be independent
parameters (even if the axiom is stated for $k=p$) to compute the \emph{watershed},
that is, the value of $k$ (expressed as a function of $p$) at which the 
centrality of $x$ becomes larger than the centrality of $y$ (if any).
The watershed can give some insight as to how badly a measure can miss
satisfying the density axiom.

\subsection{The score-monotonicity axiom}
\label{sec:mono}

Finally, we propose an axiom that specifies strictly monotonic behavior upon the addition of an arc:
 
\begin{definition}[Score--Monotonicity Axiom]
\label{def:density1}
A centrality measure satisfies the \emph{score-monotonicity axiom} if
for every graph $G$ and every pair of nodes $x$, $y$ such that $x\not\to y$,
when we add $x\to y$ to $G$ the centrality of $y$ increases.
\end{definition}

In some sense, this axiom is trivial: it is satisfied by essentially all centrality
measures we consider \emph{on strongly connected graphs}. Thus, it is
an excellent test to verify that a measure is able to handle correctly partially disconnected graphs.

The reader might be tempted to define a \emph{weak} monotonicity
axiom which just require the \emph{rank} of $y$ to be nondecreasing. However,
the centrality measure associating a constant value to every node of every
network would satisfy such an axiom, which makes it not very interesting for our
goals if used in isolation.

Nonetheless, a form of rank monotonicity would be an essential counterpoint to score monotonicity, as it would detect
pathological centralities in which the addition of an arc towards $y$ does increase the score of
$y$, but also increase in a counterintuitive way the score of other nodes,
resulting in a rank decrease of $y$.
We leave the statement and the study of such an axiom for future work.

\subsection{Previous attempts at axiomatization}
\label{sec:histax}

The idea of formalizing through axioms the behavior of centrality indices is not new:
Sabidussi, in his much-quoted paper on centrality~\cite{SabCIG},\footnote{This paper contributed so 
much to the popularization of Bavelas's closeness centrality that the latter is often called ``Sabidussi's centrality''
in the literature.} describes in a section named ``Axioms for centrality''
a set of axioms that should be satisfied by a sensible centrality on an undirected graph. Indeed, one of his conditions is
identical to our score-monotonicity axiom (see Section~\ref{sec:mono}).

A few years later, Nieminen~\cite{NieCDG} attempted a similar formalization for directed graphs. His axioms
are only meaningful for geometric centralities as they are stated in term of the number of nodes at a given distance.
Consider the \emph{negative neighborhood function of $x$ in a graph $G$} defined by \[N^-_G(x,t)=\bigl|\{y\,\mid d(y,x)\leq t\,\}\bigr|.\] Nieminen
essentially states that if the neighborhood function of $x$ dominates strictly that of $y$ (i.e., $N^-_G(x,t)\geq N^-_G(y,t)$ for all $t$ and $N^-_G(x,t)> N^-_G(y,t)$ for at least one $t$)
then the centrality of $x$ must be strictly larger than that of $y$. Since the new neighborhood function of $y$ after
the addition of an arc towards $y$ always dominates strictly the old one, Nieminen's axiom implies score monotonicity. 
He notes, indeed, that Bavelas's centrality does not satisfy some of his axioms.

The work of Chien, Dwork, Kumar, Simon and Sivakumar~\cite{CDKLEAA}, albeit
targeted exclusively at PageRank, contains a definition of score monotonicity
equivalent to ours (and Sabidussi's) and a definition of rank monotonicity that
we think captures the essence of the problem:
when we add an arc towards $y$, nodes with a score smaller than $y$ must continue to have a
score smaller that of $y$, while nodes with a score equal to $y$ must get a score
that is smaller than or equal to that of $y$. 

Sabidussi's paper tries to capture rank monotonicity, too, but with much weaker
requirements: he requires that if $y$ has maximum score in the network, then it
should have maximum score also after the addition of an arc towards $y$.

Very recently, Brandes, Kosub and Nick~\cite{BKNWMZ} proposed to call
\emph{radial}\footnote{There are actually two notion of radiality, which
correspond to our notion of ``positive'' and ``negative'' centralities.} a
centrality in which the addition of an arc $x\to y$ does not decrease the rank
of $y$, in the sense that when we add an arc towards $y$, nodes with a score
smaller than or equal to $y$ continue to have this property. This is actually a weaker condition than the one used
in~\cite{CDKLEAA} to state rank monotonicity; in fact, it makes a form of non-monotonicity of ranks possible,
as nodes with a score smaller than $y$ might end up having a score equal to $y$ when we add
an arc towards $y$. The authors of~\cite{BKNWMZ} propose also a different axiom for undirected networks.

\section{Proofs and Counterexamples}

We have finally reached the core of this paper: since we are considering
eleven centralities and three axioms, we have to verify 33 statements. For the
size and density axioms, we compute in closed form the values of all measures, from which
we can derive the desired results,
whereas for the score-monotonicity axiom we provide direct proofs or counterexamples.

We remark that in all our tables we use the proportionality symbol $\propto$ to mark
values that have been rescaled by a common factor to make them more readable.

\subsection{Size}

Table~\ref{tab:cliquecycle} provides scores for the graph $S_{p,k}$, from which
we can check whether the size axiom is satisfied. The scores are immediately
computable from the basic definitions; as we noticed, $S_{k,p}$ is highly symmetrical, and so there
are only two actual scores---the score of a node of the clique and the score of a node of the cycle.
In the case of some spectral centrality measures
there are several possible solutions, in which case we use the one returned
by the power method starting from the uniform vector.

\begin{table}
\renewcommand{\arraystretch}{2.5}
\centering
\includegraphics{fig/harmonic-2.mps}
\begin{tabular}{l|c|c}
Centrality & $k$-clique & $p$-cycle \\
\hline
Degree & 
	$k-1$ & 
	1 \\
Harmonic & 
	$k-1$ & 
	$H_{p-1}$ \\
Closeness& 
	$\displaystyle\frac1{k-1}$ & 
	$\displaystyle\frac2{p(p-1)}$ \\
Lin & 
	$\displaystyle\frac{k^2}{k-1}$  &
	$\displaystyle\frac{2p}{p-1}$ \\ 
Betweenness & 
	0 &
	$\displaystyle\frac{(p-1)(p-2)}2$\\ 
% Spectral Ranking & 
% 	$\displaystyle\frac{1-pr}{k}$ & 
% 	$r$ \\
Dominant $\propto$ & $1$ & $0$\\
Seeley $\propto$ & 
	1 & 
	1 \\
Katz & $\displaystyle\frac1{1-(k-1)\beta}$ & $\displaystyle\frac1{1-\beta}$\\
PageRank $\propto$ & $1$ & $1$\\
HITS $\propto$ & $1$ & $0$\\
SALSA $\propto$ & $1$ & $1$\\
\end{tabular}
\caption{\label{tab:cliquecycle}Centrality scores for the graph $S_{k,p}$. $H_i$ denotes the $i$-th harmonic number. The parameter $\beta$ is Katz's attenuation factor.}
\end{table}

\subsection{Density}

Table~\ref{tab:cliquecyclexyyx} provides scores for the graph $D_{p,k}$. Since the graph is strongly
connected, there is no uniqueness issue. 
While the computation of geometric and path-based centrality measures
is a tedious but rather straightforward exercise (it is just a matter
of finite summations), spectral indices require some more care. In the rest of this section, we shall first detail
the steps leading to the explicit formulae of Table~\ref{tab:cliquecyclexyyx}. Then,
we will prove that the density axiom holds. 

\subsubsection{Explicit formulae for spectral indices}

We will write the parametric equations expressing the matrix computation that
defines each spectral index and solve them. As noted before, even if the axiom requires $k=p$ we prefer to perform
the computation with two independent parameters $k$ and $p$ to compute the watershed.

In all cases, we can always use the bounds imposed by symmetry to write down just a small number of variables: $c$ for the centrality of an element of the clique,
$\ell$ for the clique bridge (``left''), $r$ for the cycle bridge (``right''), and some function $t(d)$ of the distance from
the cycle bridge for the nodes of the cycle (with $0\leq d<p$), with the condition $t(0)=r$.

\fakepar{The left dominant eigenvector} In this case, the equations are given by the standard eigenvalue problem of the adjacency matrix:
\begin{align*}
\lambda \ell &= r+(k-1)c\\
\lambda c &= \ell+(k-2)c\\
\lambda r &= \ell+\frac r{\lambda^{p-1}},
\end{align*}
subject to the condition that we choose $\lambda$ real and positive with maximum absolute value.
Note that in the case of the last equation we ``unrolled'' the equations about the elements of 
the cycle, $\lambda t(d+1)=t(d)$. Solving the system and choosing $c=1/(\lambda-k+1)$ gives the 
solutions found in Table~\ref{tab:cliquecyclexyyx}. 

Since for nonnegative matrices the dominant eigenvalue is monotone in the matrix entries, $\lambda\geq k-1$, because the 
$k$-clique has dominant eigenvalue equal to $k-1$. On the
other hand, $\lambda\leq k$ by the row-sum bound. As the eigenvalue equations have no solution for $\lambda=k-1$,  we
conclude that $k-1<\lambda \leq k$.

\fakepar{Katz's index} In this case, the equations can be obtained by the standard technique of ``taking one summand out'',
that is, writing
\[\bm k= \mathbf 1\sum_{i=0}^\infty \beta^iA^i = \mathbf 1+ \mathbf 1\sum_{i=1}^\infty \beta^iA^i = \mathbf 1+ \biggl(\mathbf 1\sum_{i=0}^\infty \beta^iA^i\biggr)\beta A = \mathbf 1+ \bm k\beta A.\]
The equations are then
\begin{align*}
\ell &=1+\beta r+\beta (k-1)c\\
c&=1+\beta\ell+\beta( k- 2 )c \\
r&=1+\beta \ell + \beta\biggl( {\frac {1-{\beta}^{p-1}}{1-\beta}}+{\beta}^{p-1}r \biggr),
\end{align*}
where again we ``unrolled'' the equations about the elements of 
the cycle, as we would have just $t(d+1)=1+\beta t(d)$, so 
\[
t(d) = \frac{1-\beta^d}{1-\beta}+\beta^d r.
\]
The explicit values of the solutions are quite ugly, so we present them in Table~\ref{tab:cliquecyclexyyx} as a function of the centrality of the clique bridge $\ell$.

\fakepar{PageRank} To simplify the computation, we use $\mathbf 1$, rather than $\mathbf 1/(k+p)$, as preference vector 
(by linearity, the result obtained just needs to be rescaled).
We use the same technique employed in the computation of Katz's index, leading to 
\begin{align*}
\ell &= 1-\alpha+\frac12\alpha r+\alpha c\\
c &= 1-\alpha+\frac\alpha k \ell+\alpha\frac{k-2}{k-1}c\\
r &= 1-\alpha+\frac\alpha k\ell+\alpha\biggl(1-\alpha^{p-1}+\frac12 \alpha^{p-1}r\biggr),
\end{align*}
noting once again that unrolling the equation of the cycle $t(1)=1-\alpha+\alpha r/2$ and $t(d+1)=1-\alpha + \alpha t(d)$ for $d>1$ we get
\[
t(d)=1-\alpha^d+\frac12\alpha^d r.
\]
The explicit values for PageRank are even uglier than those of Katz's index, so again we present them in Table~\ref{tab:cliquecyclexyyx} as a
function of the centrality of the clique bridge $\ell$.

\fakepar{Seeley's index} This is a freebie as we can just compute PageRank's limit when $\alpha\to1$.

\fakepar{HITS} In this case, we write down the eigenvalue problem for $A^TA$. Writing $t$ for $t(1)$, we have
\begin{align*}
\mu c & = (k-1)c+(k-2)^2c+(k-2)\ell+r\\
\mu\ell &=  k\ell+(k-1)(k-2)c+t\\
\mu r &= 2r+(k-1)c\\
\mu t &= t+\ell.
\end{align*}
By normalizing the result so that $c=\mu^2 -\mu(k+1)+k-1$, 
we obtain the complex but somewhat readable
values shown in Table~\ref{tab:cliquecyclexyyx}. Note that $p$ has no role in the solution, because $A^TA$ can be decomposed into two
independent blocks, one of which is an identity matrix corresponding to 
all elements of the cycle except
for the first two.

\fakepar{SALSA} 
It is easy to check that the components of the intersection graph of predecessors are given by the clique together with the
cycle bridge and its successor, and then by one component for each node of the cycle. The computation of the scores
is then trivial using the non-iterative rules.

\begin{sidewaystable*}
\centering
\renewcommand{\arraystretch}{2.5}
\includegraphics{fig/harmonic-5.mps}
\begin{tabular}{l|c|c|c|c|c|c}
Centrality & Clique & Clique bridge & Cycle bridge & Cycle ($d>0$ from the
bridge)&  Watershed  
\\
\hline
% Harmonic positive & 
% 	$k-2+H_{p+1}$ & $k - 1 + H_p$ &
% 	$\displaystyle1+\frac{k-1}{2}+ H_{p-1}$  &
% 	$\displaystyle\frac1{p-d+1} +\frac{k-1}{p-d+2}+ H_{p-1}$  &
% 	$\displaystyle k\leq 3-\frac2p$
% 	%$\displaystyle k\leq 5-\frac8{p+1}$&
% 	%$k\leq \displaystyle 5- \frac4{\lceil p/2\rceil} +[\text{$p$ even}]\frac4{p(p+2)}$&
% \\
Degree &
	$k-1$ &
	$k$ &
	$2$ &
	$1$ &
 	---
\\
Harmonic & 
	$k-2+H_{p+1}$ & $k - 1 + H_p$ &
	$\displaystyle1+\frac{k-1}{2}+ H_{p-1}$  &
	$\displaystyle\frac1{d+1} +\frac{k-1}{d+2}+ H_{p-1}$  &
	---  
\\
% Positive closeness & 
% 	$\displaystyle\frac1{k-1+2p+p(p-1)/2}$& 
% 	$\displaystyle\frac1{k-1+p+p(p-1)/2}$& 
% 	$\displaystyle\frac1{2k-1+p(p-1)/2}$&
% 	$\displaystyle\frac1{k(p-d+2)-1+p(p-1)/2}$&
% 	$k\leq p$   % $k \leq 2\lfloor p/2\rfloor + 1 - [\text{$p$ even}]$
% \\
Closeness & 
	$\displaystyle\frac1{k-1+2p+p(p-1)/2}$& 
	$\displaystyle\frac1{k-1+p+p(p-1)/2}$& 
	$\displaystyle\frac1{2k-1+p(p-1)/2}$&
	$\displaystyle\frac1{k(d+2)-1+p(p-1)/2}$&
	$k\leq p$
\\
Betweenness & 
	0 & $2p(k-1)$ & 
	$\displaystyle 2k(p-1)+\frac{(p-1)(p-2)}2$ &
	$\displaystyle 2k(p-2)+\frac{(p-1)(p-2)}2$ &
	$\displaystyle k\leq \frac{p^2+p+2}{4}$ 
 \\
% Spectral Ranking & 
% 	$k-1$ & $k$ & $2+[d=0]$ 	 
% \\
Dominant & 
 $\displaystyle\frac1{\lambda-k+1}$ &$\displaystyle1+\frac1{\lambda-k+1}$ &$1+\lambda$&$\displaystyle\frac{1+\lambda}{\lambda^d}$&
---
\\
Seeley $\propto$ & 
	$\displaystyle k-1$&
	$\displaystyle k$&
	$\displaystyle 2$&
	$\displaystyle 1$&
	---
\\
Katz $\propto$  & % b=1+ac+ar(k-1)  r =1+ ab + ar(k-2)
 	$\displaystyle\frac{1+\beta \ell}{1-\beta(k-2)}$&
 	$\ell$&
 	$\displaystyle\frac1{1-\beta}+\frac\beta{1-\beta^p}\ell $&
 	$\displaystyle\frac1{1-\beta}+\frac{\beta^{d+1}}{1-\beta^p}\ell$&
 	---
\\ 
PageRank $\propto$ & 
	$\displaystyle\frac{(k-1)(k-\alpha k+\alpha \ell)}{k(k-1-\alpha(k-2))}$&
	$\ell$&
	$\displaystyle 2+2\frac{\alpha \ell-k}{k(2-\alpha^p)}$&
	$\displaystyle 1+\alpha^d\frac{\alpha \ell-k}{k(2-\alpha^p)}$&
	---
\\
% Seely & 
% 	$\displaystyle\frac{k-1}{k(k-1)+p+2}$&
% 	$\displaystyle\frac k{k(k-1)+p+2}$&
% 	$\displaystyle\frac2{k(k-1)+p+2}$&
% 	$\displaystyle\frac1{k(k-1)+p+2}$&
% 	$k\leq 2$
% \\
HITS $\propto$&
 $\mu^2-\mu(k+1)+k-1$ &
  $(k-1)(k-2)(\mu-1)$ &
\begin{minipage}{5cm}\begin{multline*}\mu^3
-(k^2-2k+4)\mu^2\\+(3k^2-7k+6)\mu -(k-1)^2\end{multline*}\end{minipage}&
$[d=1](k-1)(k-2)$ & ---
\\
% HITS $\propto$&
%  $\mu^2-\mu(k+1)+k-1$ &
%   $(k-1)(k-2)(\mu-1)$ &
% \begin{minipage}{5cm}\begin{multline*}\mu^3
% -(k^2-2k+4)\mu^2\\+(3k^2-7k+6)\mu -(k-1)^2\end{multline*}\end{minipage}&
% $[d=1](k-1)(k-2)$ & ---
% \\
SALSA $\propto$ & $\displaystyle(k-1)(k+2)$ &
$\displaystyle k(k+2)$ & $\displaystyle2(k+2)$& $\displaystyle k+2 +[d\neq1](k^2-2k+2)$ &
---
\\
% SALSA $\propto$ & $\displaystyle\frac{(k-1)(k+2)}{4+k(k-1)}$ &
% $\displaystyle\frac{k(k+2)}{4+k(k-1)}$ & $\displaystyle\frac{2(k+2)}{4+k(k-1)}$
% & $[d=1]{\displaystyle\frac{k+2}{4+k(k-1)}}+ [d\neq1]$&---\\
\end{tabular}
\caption{\label{tab:cliquecyclexyyx}Centrality scores for the graph $D_{k,p}$.
The parameter $\beta$ is Katz's attenuation factor, $\alpha$ is PageRank's damping factor, $\lambda$
is the dominant eigenvalue of the adjacency matrix $A$ and $\mu$
is the dominant eigenvalue of the matrix $A^TA$.  Lin's centrality is omitted because it is
proportional to closeness (the graph being strongly connected).}
\end{sidewaystable*}

\subsubsection{Proofs}

Armed with our explicit formulation of spectral scores, we have now to
prove whether the density axiom holds, that is, whether $\ell>r$ when $k=p$.
Note that in Table~\ref{tab:cliquecyclexyyx} we report no watershed for all
spectral centrality measures, which means even more: $\ell>r$
even when $k\neq p$, provided that $k,p\geq 3$. The proofs in this section
cover this stronger statement.

\begin{theorem}
\label{teo:HITSdensity}
HITS satisfies the density axiom.
\end{theorem}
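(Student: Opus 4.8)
The plan is to avoid computing $\mu$ explicitly (it is the largest root of a quartic) and instead to reduce the desired inequality $\ell>r$ to a statement about a low-degree polynomial in $\mu$ whose sign can be controlled for \emph{every} admissible value of $\mu$. I would start from the HITS eigenvalue equations for $A^TA$ rather than the unwieldy closed forms of Table~\ref{tab:cliquecyclexyyx}, and adopt the table's normalization $c=(\mu-k)(\mu-1)-1$. From the cycle-bridge equation $\mu r=2r+(k-1)c$ one reads off $r=(k-1)c/(\mu-2)$, while the clique-bridge equation, after eliminating $t$ through $\mu t=t+\ell$, gives $\ell/c=(k-1)(k-2)(\mu-1)/\bigl((\mu-k)(\mu-1)-1\bigr)$; note that the denominator is exactly the normalized value of $c$. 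Since $\ell$ and $r$ come from the same eigenvector, the difference factors cleanly as $\ell-r=\tfrac{k-1}{\mu-2}\,f(\mu)$, where $f(\mu)=(k-2)(\mu-1)(\mu-2)-\bigl((\mu-k)(\mu-1)-1\bigr)$.

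Next I would pin down the signs of the harmless prefactor and of the relevant denominators using a single crude bound. Because the clique is an induced subgraph of $D_{k,p}$, its adjacency matrix is a principal submatrix of $A$; that matrix has eigenvalues $k-1$ and $-1$, hence largest singular value $k-1$, so $\sigma_1(A)\ge k-1$ and therefore $\mu=\sigma_1(A)^2\ge(k-1)^2\ge 4>2$ for $k\ge 3$. This one estimate does all the bookkeeping: it makes $\mu-2>0$ (so the prefactor $\tfrac{k-1}{\mu-2}$ is positive) and it forces $(\mu-k)(\mu-1)-1>0$ (both factors are positive once $\mu\ge(k-1)^2$), which certifies $c>0$ and $r>0$, consistent with the Perron--Frobenius positivity of the dominant eigenvector on the relevant block. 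The fact that $p$ plays no role is inherited from the block decomposition of $A^TA$ already noted.

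It then remains to show $f(\mu)>0$, and here lies the one genuinely non-routine point. Expanding gives the \emph{quadratic} $f(\mu)=(k-3)\mu^2-(2k-7)\mu+(k-3)$, whose discriminant equals $13-4k$. For $k\ge 4$ this discriminant is negative while the leading coefficient $k-3$ is positive, so $f(\mu)>0$ for every real $\mu$; for $k=3$ the quadratic collapses to $f(\mu)=\mu$, which is positive since $\mu>0$. Hence $\ell-r=\tfrac{k-1}{\mu-2}\,f(\mu)>0$ for all $k\ge 3$ and all $p\ge 3$, in particular when $k=p$, which is precisely the density axiom.

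The step I expect to be the real obstacle — or at least the one that makes or breaks a clean argument — is recognizing that the comparison collapses to a quadratic with negative discriminant, so that no sharp control of the quartic root $\mu$ is needed. Without that observation one is tempted to bound $\mu$ tightly (e.g.\ $(k-1)^2\le\mu\le k^2$) and then wrestle with an inequality involving the cubic form of $r$ from the table; the discriminant shortcut sidesteps all of that, and the only external input required is the coarse bound $\mu\ge(k-1)^2$ coming from the induced clique.
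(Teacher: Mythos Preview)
Your argument is correct and is genuinely cleaner than the paper's. The key difference is that you keep the raw eigenvector relation $r=(k-1)c/(\mu-2)$ coming from $\mu r=2r+(k-1)c$, whereas the paper substitutes the characteristic equation into $r$ to obtain the cubic listed in Table~\ref{tab:cliquecyclexyyx}. By not doing that substitution, you get $\ell-r=\frac{k-1}{\mu-2}\,f(\mu)$ with $f$ a mere quadratic in $\mu$; the discriminant $13-4k$ then disposes of all $k\ge 4$ at once, and $k=3$ is immediate. The paper instead works with the cubic $\ell-r=-\mu^3+(k^2-2k+4)\mu^2-(2k^2-4k+4)\mu+(k-1)$, which forces it to pin $\mu$ between $(k-1)^2$ and $k^2-2k+5/4$, argue the upper bound via a Sturm-sequence computation valid only for $k\ge 9$, and finish the small cases by explicit checking. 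Your route needs only the coarse lower bound $\mu\ge(k-1)^2$ (which both proofs use), no upper bound, and no case analysis.

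One small wording issue: when you write that ``both factors are positive once $\mu\ge(k-1)^2$'' to conclude $(\mu-k)(\mu-1)-1>0$, positivity of the two factors alone does not give the $-1$; you should note that in fact $\mu-k\ge(k-1)^2-k=k^2-3k+1\ge 1$ and $\mu-1\ge 3$ for $k\ge 3$, so the product is at least $3$. This is a trivial fix, and the rest of your argument stands.
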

\begin{proof}
As we have seen, we can normalize the solution to the HITS equations so that
\begin{align*}
\ell &= (k-1)(k-2)(\mu-1) \\
r &= \mu^3 -(k^2-2k+4)\mu^2+(3k^2-7k+6)\mu -(k-1)^2
\end{align*}
Moreover, the characteristic polynomial can be computed explicitly from the set of equations and 
by observing that the vectors $\bm\chi_{k+i}$ and $\bm\chi_0-\bm\chi_i$, $0< i< p$, are linearly independent
eigenvectors for the eigenvalue $1$: 
\[
p(\mu) = \bigl({\mu}^{4}- ( {k}^{2}-2k+6 ) {\mu}^{3}+ ( 5 {k}^{2}-12k+15 ) {\mu}^{2} -( 6{k}^{2}-16k+14 ) \mu+{k}^{2}-2k+1\bigr)(\mu-1)^{k+p-4}.
\]
The largest eigenvalue $\mu_0$ satisfies the inequality $(k-1)^2\leq\mu_0\leq k^2-2k+5/4$ for every $k\geq 9$ as shown below 
(the statement of the theorem can be verified in the remaining cases by explicit computation, as it does not depend on $p$).
Using the stated upper and lower bounds on $\mu_0$, we can say that 
\begin{eqnarray*}
\lefteqn{\ell-r=(k-1)(k-2)(\mu-1) - ( \mu^3 -(k^2-2k+4)\mu^2+(3k^2-7k+6)\mu -(k-1)^2 )  }\\
 &=& -{\mu}^{3}+ ( {k}^{2}-2k+4 ) {\mu}^{2}- ( 2
{k}^{2}-4k+4 ) \mu+k-1\\
&\geq&- \biggl( {k}^{2}-2k+\frac54 \biggr) ^{3}+ ( {k}^{2}-2k+4 ) 
 ( k-1 ) ^{4}- ( 2{k}^{2}-4k+4 )  \biggl( {k
}^{2}-2k+\frac54 \biggr) +k-1\\
&=&\frac14k^4-k^3-\frac{19}{16}k^2+\frac{43}{8}k-\frac{253}{64},
\end{eqnarray*}
which is positive for $k>4$. 

We are left to prove the bounds on $\mu_0$. The lower bound can be easily
obtained by monotonicity of the dominant eigenvalue in the matrix entries,
because the dominant eigenvalue of a $k$-clique is $k-1$.
For the upper bound, first we observe that $\mu_0$ can be computed
explicitly (as it is the solution of a quartic equation) and using its
expression in closed form it is possible to show that
$\lim_{k\to\infty}\mu_0 - (k-1)^2=0$. This guarantees that the bound
$\mu_0\leq k^2-2k+5/4$ is true ultimately. To obtain an explicit value of
$k$ after which the bound holds true, observe that $k^2-2k+5/4=\mu_0$
implies $q(k)=0$, where $q(k)=p(k^2-2k+5/4)$. Computing the Sturm
sequence associated with $q(k)$ one can prove that $q(k)$ has no zeroes for $k\geq 9$,
hence our lower bound on $k$.
\end{proof}

\begin{theorem}
The dominant eigenvector satisfies the density axiom.
\end{theorem}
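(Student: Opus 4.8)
The plan is to reduce the required inequality $\ell>r$ (clique bridge beats cycle bridge) to a comparison that the crude eigenvalue bounds $k-1<\lambda\le k$ established in the previous theorem already settle. From Table~\ref{tab:cliquecyclexyyx} the clique bridge carries score $\ell = 1+1/(\lambda-k+1)$ and the cycle bridge carries $r=1+\lambda$, so $\ell>r$ is literally the statement $1/(\lambda-k+1)>\lambda$. I would deliberately \emph{not} attack this inequality head-on: with only $\lambda\le k$ in hand, the left-hand side could be as small as $1$ while $\lambda$ is as large as $k$, so a direct attack would force me to locate $\lambda$ precisely and prove it sits near $k-1$ rather than near $k$. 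Locating the quartic-type root $\lambda$ explicitly is exactly the kind of grind I want to avoid.

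Instead I would fold the characteristic equation into the comparison. Subtract the cycle-bridge equation $\lambda r=\ell+t(p-1)$ (with $t(p-1)=r/\lambda^{p-1}$ the score of the cycle node feeding the bridge) from the clique-bridge equation $\lambda\ell=r+(k-1)c$, where $c$ is the clique score. This gives $(\lambda+1)(\ell-r)=(k-1)c-t(p-1)$, and since $\lambda+1>0$ the whole axiom collapses to the single inequality $(k-1)c>t(p-1)$.

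That reformulated inequality is then dispatched by elementary size estimates. Using $c=1/(\lambda-k+1)$ together with $\lambda\le k$ we have $\lambda-k+1\le 1$, hence $(k-1)c\ge k-1\ge 2$. On the other side, $t(p-1)=(1+\lambda)/\lambda^{p-1}\le(1+\lambda)/\lambda^2=1/\lambda+1/\lambda^2$, and since $\lambda>k-1\ge 2$ and $p\ge 3$ this is strictly below $1/2+1/4=3/4$. As $k-1\ge 2>3/4$, we conclude $(k-1)c>t(p-1)$, hence $\ell>r$, and this holds for all $k,p\ge 3$ rather than only $k=p$, matching the empty watershed entry in the table.

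The one place genuine insight is needed is the subtraction step of the second paragraph: it is precisely what trades the troublesome large term $\lambda$ on the right for the harmless small term $t(p-1)$, in effect using the cycle's own eigenvalue equation to constrain $\lambda$ without ever solving for it. Once that algebraic move is made, every remaining estimate invokes only $k-1<\lambda\le k$, so no sharp control of the dominant eigenvalue is required and the rest is routine.
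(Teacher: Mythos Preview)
Your proof is correct and takes a genuinely different route from the paper's. The paper computes $\ell-r$ directly from the closed forms in Table~\ref{tab:cliquecyclexyyx} and then needs a sharp upper bound on $\lambda$: it borrows $\lambda^2\le\mu_0\le k^2-2k+5/4$ from the HITS analysis (Theorem~\ref{teo:HITSdensity}), which only holds for $k\ge 9$, and then has to treat the cases $k<9$ separately, invoking implicit differentiation of the characteristic equation in $p$ to rule out a watershed. Your subtraction $(\lambda+1)(\ell-r)=(k-1)c-t(p-1)$ sidesteps all of that: the right-hand side is bounded using only $k-1<\lambda\le k$, which the paper already records when deriving the explicit formulae, so your argument is self-contained (no dependence on the HITS theorem), requires no case split, and handles all $k,p\ge 3$ uniformly. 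The trade-off is that the paper's approach, while heavier, yields an explicit lower bound on $\ell-r$ in terms of $k$ alone, whereas yours only certifies positivity; for the axiom, positivity is all that is needed. One minor wording issue: the bounds $k-1<\lambda\le k$ are established in the paragraph on the left dominant eigenvector in the explicit-formulae subsection, not in ``the previous theorem''.
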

\begin{proof}
From the proof of Theorem~\ref{teo:HITSdensity} we know that $\lambda^2\leq k^2-2k+5/4$ for every $k\geq 9$, because
$\sqrt\mu$ is the spectral norm of $A$ and thus dominates its spectral radius $\lambda$, that is, $\lambda^2\leq \mu$. 
We conclude that
\begin{multline*}
\ell - r = 1 + \frac 1{\lambda-k+1} - ( 1 + \lambda ) =\frac{-\lambda^2+(k-1)\lambda + 1}{\lambda -k + 1}\\
>\frac{-(k^2-2k+5/4)+(k-1)^2 + 1}{\lambda -k + 1} = \frac3{4(\lambda -k + 1)}>0. 
\end{multline*}
The remaining cases ($k<9$) are verifiable by explicit computation when $p\leq k$. To prove that there is no watershed,
we first note that $\ell - r$ is decreasing in $\lambda$. Then, one can use the characteristic equation defining $\lambda$ and 
implicit differentiation to write down the values of the derivative of $\lambda$ as a function of $p$ for each $k<9$.
It is then easy to verify that in the range $p>k$, $k-1<\lambda\leq k$ the derivative is always negative, that is, $\lambda$ is decreasing in $p$,
which completes the proof.
\end{proof}

\begin{theorem}
\label{teo:katzdensity}
Katz's index satisfies the density axiom when $\beta\in(0\..1/\lambda)$.
\end{theorem}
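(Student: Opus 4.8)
The plan is to prove the stronger ``no watershed'' claim: $\ell>r$ for all $k,p\ge 3$ and all $\beta\in(0\..1/\lambda)$, where $\lambda$ is the dominant eigenvalue of the adjacency matrix of $D_{k,p}$. Recall from the dominant-eigenvector analysis that $k-1<\lambda\le k$, so $\beta<1/\lambda<1/(k-1)$ and in particular $\beta(k-1)<1$. I start from the three Katz equations, solving the unrolled cycle equation to write the cycle bridge as an affine function of the clique bridge, $r=\frac1{1-\beta}+\frac{\beta}{1-\beta^p}\,\ell$, and I record the clique relation $c=\frac{1+\beta\ell}{1-\beta(k-2)}$ (whose denominator is positive since $\beta(k-2)<\beta(k-1)<1$).

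The first key step is an algebraic simplification that sidesteps the unwieldy closed form of $\ell$. Substituting the expression for $r$ into the bridge equation $\ell=1+\beta r+\beta(k-1)c$ and collecting terms yields
\[
\ell-r=\beta\Bigl[(k-1)c-\gamma\,\ell\Bigr],\qquad \gamma=\frac{1-\beta}{1-\beta^p}.
\]
Since $p\ge 3$ and $0<\beta<1$ force $\beta^p<\beta$, we have $\gamma<1$; hence, if I can establish $(k-1)c\ge \ell$, then $(k-1)c-\gamma\ell\ge(1-\gamma)\ell>0$ (using $\ell\ge 1$, as every Katz score dominates the length-zero contribution $\mathbf 1$), and $\ell>r$ follows. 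The whole theorem thus reduces to the single inequality $(k-1)c\ge \ell$.

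To prove $(k-1)c\ge\ell$ I substitute $c=\frac{1+\beta\ell}{1-\beta(k-2)}$ and clear the positive denominator, obtaining the equivalent form $\ell\,(1-\beta(2k-3))\le k-1$. I then split on the sign of $1-\beta(2k-3)$. If $1-\beta(2k-3)\le 0$, the inequality is immediate because its left-hand side is $\le 0\le k-1$. In the complementary case $\beta<\frac1{2k-3}$ I observe that $\frac1{2k-3}\le\frac1k$ for $k\ge 3$, so $\beta k<1$; this lets me bound $\ell$ from above by a path count, since every node of $D_{k,p}$ has in-degree at most $k$, whence the number of length-$j$ paths into the clique bridge is at most $k^j$ and $\ell\le\sum_{j\ge 0}\beta^jk^j=\frac1{1-\beta k}$. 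It then suffices to verify $\frac{1-\beta(2k-3)}{1-\beta k}\le k-1$, which rearranges to $\beta(k^2-3k+3)\le k-2$, i.e.\ $\beta\le\frac{k-2}{k^2-3k+3}$; and this holds because $\beta<\frac1{2k-3}\le\frac{k-2}{k^2-3k+3}$, the last inequality being equivalent to $(k-1)(k-3)\ge 0$, true for every $k\ge 3$.

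I expect the main obstacle to be this hard case: the explicit solution for $\ell$ is too messy to manipulate directly, so the crux is replacing it with the clean, graph-theoretic upper bound $\ell\le 1/(1-\beta k)$ coming from the maximum in-degree $k$ of $D_{k,p}$, which is legitimate precisely because $\beta k<1$ holds in that regime. A convenient feature of this route is that the closing inequality $(k-1)(k-3)\ge 0$ covers all $k\ge 3$ uniformly, so no separate treatment of small $k$ is needed, and the argument nowhere uses $k=p$, thereby delivering the stronger no-watershed conclusion.
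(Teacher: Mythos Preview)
Your proof is correct, and it takes a genuinely different route from the paper's.

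The paper argues by continuity: it first invokes the dominant-eigenvector theorem to conclude that $\ell>r$ holds as $\beta\to 1/\lambda$, and then shows that $\ell=r$ can never occur in the admissible range by solving the system under the constraint $\ell=r$ and obtaining $p=\ln\!\bigl((\beta^2+k-2)/(k-1)\bigr)/\ln\beta\le 1$. Your argument is instead direct and self-contained. The identity $\ell-r=\beta\bigl[(k-1)c-\gamma\ell\bigr]$ with $\gamma=(1-\beta)/(1-\beta^p)<1$ cleanly reduces the question to the single inequality $(k-1)c\ge\ell$, and your case split on the sign of $1-\beta(2k-3)$, together with the combinatorial bound $\ell\le 1/(1-\beta k)$ coming from the maximum indegree, finishes the job without ever writing $\ell$ in closed form. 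Each approach has its virtue: the paper's is shorter once the dominant-eigenvector result is available and makes the watershed condition explicit, whereas yours does not depend on any previous theorem and avoids the continuity step altogether, at the cost of the case analysis and the auxiliary path-counting bound.
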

\begin{proof}
Recall that the equations for Katz's index are
\begin{align*}
\ell &=1+\beta r+\beta (k-1)c\\
c&=1+\beta\ell+\beta( k- 2 )c \\
r&=1+\beta \ell + \beta\biggl( {\frac {1-{\beta}^{p-1}}{1-\beta}}+{\beta}^{p-1}r \biggr).
\end{align*}
First, we remark that as $\beta\to 1/\lambda$ Katz's index tends to the dominant eigenvector, 
so $\ell>r$ for $\beta$ close enough to $1/\lambda$. Thus, by continuity, we just need to show that $\ell=r$ never happens in the range of our parameters. If we
solve the equations above for $c$, $\ell$ and $r$ and impose $\ell=r$, we obtain
\[
p=\frac{\displaystyle\ln\frac{\beta^2+k-2}{k-1}}{\ln \beta}.
\]
Now observe that 
\[\beta\leq\frac{\beta^2+k-2}{k-1}\]
is always true for $\beta\leq 1$ and $k\geq 3$. This implies that under the same conditions $p\leq 1$, which concludes the proof.
\end{proof}

\begin{theorem}
PageRank with constant preference vector satisfies the density axiom when $\alpha\in (0\.. 1)$.
\end{theorem}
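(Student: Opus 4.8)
The plan is to follow exactly the skeleton of the proof of Theorem~\ref{teo:katzdensity} for Katz's index: exploit a boundary limit to fix the sign of $\ell-r$ at one end of the parameter interval, and then argue by continuity that the sign cannot change because $\ell=r$ never occurs for $\alpha\in(0\..1)$ and $k,p\geq 3$. The natural anchor is $\alpha\to 1^-$: as recalled in the discussion of Seeley's index, PageRank tends to Seeley's index in this limit, and the Seeley row of Table~\ref{tab:cliquecyclexyyx} gives clique-bridge value $k$ and cycle-bridge value $2$, so $\ell-r\to k-2>0$ for $k\geq 3$ (the common $\propto$ rescaling is a single positive factor and does not affect the sign). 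Since $\ell-r$ is a continuous function of $\alpha$ on $(0\..1)$ for every fixed $k,p$, it suffices to rule out any zero of $\ell-r$ in the open interval; this simultaneously yields the stronger ``no watershed'' claim, since nothing forces $k=p$.

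To locate the zeros I would first eliminate $c$. Solving the clique equation gives $c$ as an affine function of $\ell$ (the clique value of Table~\ref{tab:cliquecyclexyyx}); substituting it into the clique-bridge equation expresses $\ell$ affinely in $r$, and together with the cycle-bridge equation $r(2-\alpha^p)=2(1-\alpha^p)+(2\alpha/k)\ell$ this is a $2\times2$ linear system in $\ell,r$. Imposing $\ell=r$ and clearing denominators collapses the system to a single polynomial identity which, after removing the common factor, reads $\alpha^p=g(\alpha,k)$, where
\[
g(\alpha,k)=\frac{\alpha\bigl(-(k-1)(k-2)+(k^2-4k+6)\alpha-2\alpha^2\bigr)}{k(k-1)-k(2k-3)\alpha+(k-1)(k-2)\alpha^2}=:\frac{\alpha\,n(\alpha)}{D(\alpha)}.
\]
Unlike the Katz case, this cannot be inverted to a clean closed form for $p$; instead the goal becomes to show that $g(\alpha,k)\notin(0\..1)$ for all $\alpha\in(0\..1)$ and $k\geq 3$, so that it can never coincide with $\alpha^p\in(0\..1)$.

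The sign analysis splits at the unique root $\alpha^\ast\in(0\..1)$ of the denominator, which exists because $D(0)=k(k-1)>0$ while $D(1)=2-k<0$. The numerator quadratic $n(\alpha)$ is concave with $n(0)=-(k-1)(k-2)<0$ and $n(1)=2-k<0$, and a quick vertex check shows $n<0$ on all of $[0\..1]$; hence $g<0$ on $(0,\alpha^\ast)$, where it cannot equal the positive $\alpha^p$. On $(\alpha^\ast,1)$ one has $D<0$, so $g>0$, and here I would prove the sharper bound $g>1$. Writing $g-1=(\alpha n(\alpha)-D(\alpha))/D(\alpha)$, the cubic $\alpha n(\alpha)-D(\alpha)$ vanishes at $\alpha=1$ (consistently with $g(1,k)=1$) and factors as $-(1-\alpha)\,J(\alpha)$ with $J(\alpha)=-2\alpha^2+(2-k)\alpha+k(k-1)$; since $J$ is concave with $J(0)=k(k-1)>0$ and $J(1)=k(k-2)>0$, it stays positive on $[0\..1]$, so $\alpha n(\alpha)-D(\alpha)<0$ there. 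Dividing by the negative $D$ gives $g-1>0$, i.e. $g>1>\alpha^p$ on $(\alpha^\ast,1)$. The isolated point $\alpha=\alpha^\ast$ is handled by the uncleared identity, which at $D=0$ would force $\alpha\,n(\alpha)=0$, impossible on $(0\..1)$.

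The main obstacle is the elimination step: the PageRank formulae are, as the text notes, even heavier than Katz's, and the reduction to the compact condition $\alpha^p=g(\alpha,k)$ requires careful bookkeeping. The genuinely delicate point, however, is the second subinterval, where one must prove $g>1$ rather than $g<0$; the whole argument hinges on the factorization $\alpha n(\alpha)-D(\alpha)=-(1-\alpha)J(\alpha)$ together with the concavity of $J$, which convert an otherwise opaque rational inequality into two elementary endpoint-plus-concavity checks. Everything else---the $\alpha\to1$ anchor, the continuity, and the lower-interval sign---is routine.
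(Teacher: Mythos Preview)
Your proof is correct and follows essentially the same route as the paper's: anchor the sign of $\ell-r$ at $\alpha\to 1$ via the Seeley limit, then argue by continuity after showing that the equation obtained from $\ell=r$ has no solution for $k,p\geq 3$ and $\alpha\in(0\..1)$. The algebraic core is identical---your condition $\alpha^p=g(\alpha,k)=\alpha\,n(\alpha)/D(\alpha)$ is exactly the paper's equation rewritten.

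Two minor remarks. First, contrary to your claim that the equation ``cannot be inverted to a clean closed form for $p$'', the paper does precisely that: taking logarithms yields
\[
p=1+\frac{\ln\bigl(n(\alpha)/D(\alpha)\bigr)}{\ln\alpha},
\]
and the paper then shows $n/D\geq 1$ on the region $D<0$ (your $(\alpha^\ast,1)$), forcing $p\leq 1$. Your alternative---proving $g>1$ via the factorization $\alpha n-D=-(1-\alpha)J(\alpha)$---is a slightly different but equally valid inequality leading to the same contradiction; in fact your bound $g>1$ is marginally sharper (it forces $p<0$), but either suffices. Second, your handling of the pole $\alpha=\alpha^\ast$ is fine in spirit, though it is cleaner to note that $\ell-r$ is \emph{a priori} a smooth function of $\alpha$ on $(0\..1)$ (PageRank is well-defined there), so the cleared polynomial identity $\alpha^pD(\alpha)=\alpha\,n(\alpha)$ holds at any zero of $\ell-r$, and at $\alpha^\ast$ this would force $\alpha^\ast n(\alpha^\ast)=0$, which you have already excluded.
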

\begin{proof}
The proof is similar to that of Theorem~\ref{teo:katzdensity}. Recall that the equations for PageRank are
\begin{align*}
\ell&=1-\alpha+\frac12\alpha r+\alpha c\\
c&=1-\alpha+\frac{\alpha}k\ell+
\frac{\alpha(k-2)}{k-1}c \\
r&=1-\alpha+\frac{\alpha}k\ell+\alpha\biggl( 1-
\alpha^{p-1}+\frac12\alpha^{p-1}r \biggr).
\end{align*}
First, we remark that as $\alpha\to 1$ PageRank tends to Seeley's index, so  
$\ell>r$ for $\alpha$ close enough to $1$. By continuity, we thus just need to show that $\ell=r$ never happens in our range of parameters. If we
solve the equations above for $c$, $\ell$ and $r$ and impose $\ell=r$, we obtain
\[
p=1+\frac{\displaystyle\ln\biggl(-\frac{2\alpha^2-(k^2-4k+6)\alpha+k^2-3k+2}{(k^2-3k+2)\alpha^2-(2k^2-3k)\alpha+k^2-k}\biggr)}{\ln \alpha}.
\]
Now observe that $2\alpha^2-(k^2-4k+6)\alpha+k^2-3k+2\geq 0$ for $k\geq 3$. Thus, a solution for $p$ exists only when
the denominator is negative. However, in that region 
\[-\frac{2\alpha^2-(k^2-4k+6)\alpha+k^2-3k+2}{(k^2-3k+2)\alpha^2-(2k^2-3k)\alpha+k^2-k}\geq 1.\]
This implies that under the same conditions $p\leq 1$, which concludes the proof.
\end{proof}

\subsection{Score Monotonicity}

In this section, we briefly discuss the only nontrivial cases. 

\fakepar{Harmonic} If we add an arc $x\to y$ the harmonic centrality of $y$ can only increase, because
  this addition can only decrease distances (possibly even turning some of them from infinite to finite), so
  it will increase their reciprocals (strictly increasing the one from $x$). 
  
\fakepar{Closeness} If we consider a one-arc graph $z\to y$ and add an arc $x\to y$, the closeness
  of $y$ decreases from $1$ to $1/2$.
  
\fakepar{Lin} Consider the graph in Figure~\ref{fig:counter}: the Lin centrality of $y$ is $(k+1)^2/k$.
  After adding an arc $x \to y$, the centrality becomes $(k+5)^2/(k+9)$, which is smaller
  than the previous value when $k>3$.
\begin{figure}
\centering
\includegraphics{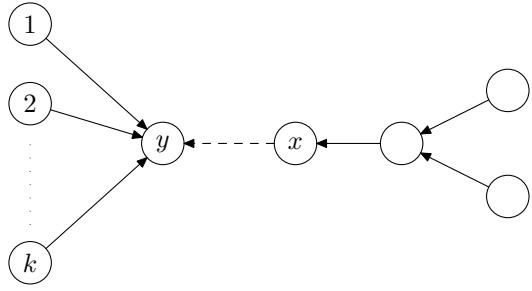}
\caption{\label{fig:counter}A counterexample showing that Lin's index fails to satisfy the score-monotonicity axiom.}
\end{figure}
  
\fakepar{Betweenness} If we consider a graph made of two isolated nodes $x$ and $y$, the addition of
  the arc $x\to y$ leaves the betweenness of $x$ and $y$ unchanged. 
%   after the addition of the arc $x \to y$, for every pair of nodes $u,v$ 
%   either the distance from $u$ to $v$ has decreased (but then all the shortest paths going from $u$ to $v$
%   pass on the new arc, and the ratio is hence $1$, the largest possible), or it has remained the same;
%   in the latter case, all the shortest paths from $u$ to $v$ are still shortest, and
%   some new one may be born, but all of them pass through $x \to y$ (and $\frac{a+c}{b+c}\geq\frac{a}{b}$ if $a\geq b$).

\fakepar{Katz} The score of $y$ after adding $x \to y$ can only increase because 
  the set of paths coming into $y$ now contains new elements.\footnote{It should be noted, however,
  that this is true only for the values of the parameter $\beta$ that still make sense after the
  addition.} If the constant vector $\mathbf 1$ is replaced by
  a preference vector $\bm v$ in the definition, it is necessary
  that $x$ have nonzero score before the addition for score monotonicity to hold. 

\fakepar{Dominant eigenvector, Seeley's index, HITS} If we consider a clique and two isolated nodes $x$, $y$, the score given by
  the dominant eigenvector, Seeley's index and HITS to $x$ and $y$ is zero, and it remains 
  unchanged when we add the arc $x\to y$.

\fakepar{SALSA} Consider the graph in Figure~\ref{fig:countersalsa}: the indegree of $y$ is $1$, and its component in the
intersection graph of predecessors is trivial, so its SALSA centrality is $(1/1) \cdot (1/6)=1/6$.
After adding an arc $x \to y$, the  indegree of $y$ becomes $2$, but now its component is $\{\,y,z\,\}$; so the 
sum of indegrees within the component is $2+3=5$, hence the centrality of $y$ becomes $(2/5)\cdot (2/6)=2/15<1/6$.
\begin{figure}
\centering
\includegraphics{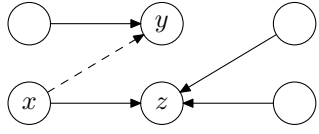}
\caption{\label{fig:countersalsa}A counterexample showing that SALSA fails to satisfy the score-monotonicity axiom.}
\end{figure}

\fakepar{PageRank}
Score monotonicity of PageRank was proved by Chien, Dwork, Kumar, Simon and
Sivakumar~\cite{CDKLEAA}. Their proof works for a generic \emph{regular} Markov chain:
in the case of PageRank this condition is true, for instance, if the preference vector is
strictly positive or if the graph is strongly connected. Score monotonicity under the same hypotheses is also a
consequence of the analysis made by Avrachenkov and Litvak~\cite{AvLENLGP} of the behavior of
PageRank when multiple new links are added.

Their result can be extended to a much more general setting. Suppose
that we are adding the arc $x \to y$, with the proviso that the PageRank of $x$ before adding the arc was strictly positive.
We will show that under this condition the score of $y$ will increase for \emph{arbitrary graphs and preference vectors}.
The same argument shows also that the if the score of $x$ is zero, the score of $y$ does not change.

For this proof, we define PageRank as $\bm v\bigl(1-\alpha\bar A\bigr)^{-1}$ (i.e., without the normalizing factor $1-\alpha$),
so to simplify our calculations. By linearity, the result for the standard definition follows immediately.

Consider two nodes $x$ and $y$ of a graph $G$ such that there is no arc from $x$ to $y$, and let $d$ be the outdegree of $x$.
Given the normalized matrix $\bar A$ of $G$, and the normalized matrix $\bar A'$ of the graph $G'$ obtained by adding to $G$ the arc
$x\to y$, we have
\[\bar A -\bar A' = \bm \chi_x^T\bm \delta,\] 
where $\bm\delta$ is the
difference between the rows corresponding to $x$ in $\bar A$ and $\bar A'$, which contains $1/d(d+1)$ in the positions corresponding to the successors of $x$ in $G$, and $-1/(d+1)$ in the position corresponding to $y$ (note that if $d=0$, we have
just the latter entry). 

We now use the Sherman--Morrison formula to write down the inverse of $1-\alpha \bar A'$ as a function of $1-\alpha\bar A$. More precisely,
\begin{multline*}
\bigl(1-\alpha\bar A'\bigr)^{-1} = \Bigl(1-\alpha\bigl(\bar A -\bm \chi_x^T\bm \delta \bigr)\Bigr)^{-1} = \bigl(1-\alpha\bar A +\alpha\bm \chi_x^T\bm \delta\bigr)^{-1}
\\=\bigl(1-\alpha\bar A\bigr)^{-1} -
\frac{\bigl(1-\alpha\bar A\bigr)^{-1}\alpha\bm \chi_x^T\bm \delta \bigl(1-\alpha\bar A\bigr)^{-1}}{1 + \alpha\bm\delta \bigl(1-\alpha\bar A\bigr)^{-1} \bm \chi_x^T}.   
\end{multline*}

We now multiply by the preference vector $\bm v$, obtaining the explicit PageRank correction:
\begin{multline*}
\bm v\bigl(1-\alpha\bar A'\bigr)^{-1} = \bm v\bigl(1-\alpha\bar A\bigr)^{-1} - \bm v
\frac{\bigl(1-\alpha\bar A\bigr)^{-1}\alpha\bm \chi_x^T\bm \delta \bigl(1-\alpha\bar A\bigr)^{-1}}{1 + \alpha\bm\delta \bigl(1-\alpha\bar A\bigr)^{-1} \bm \chi_x^T}\\=
\bm r -  \frac{ \alpha\bm r\bm \chi_x^T\bm \delta \bigl(1-\alpha\bar A\bigr)^{-1}}{1 + \alpha\bm\delta \bigl(1-\alpha\bar A\bigr)^{-1} \bm \chi_x^T}=
\bm r -  \frac{ \alpha r_x\bm \delta \bigl(1-\alpha\bar A\bigr)^{-1}}{1 + \alpha\bm\delta \bigl(1-\alpha\bar A\bigr)^{-1} \bm \chi_x^T}.	
\end{multline*}
Now remember that $r_x>0$, and 
note that $\bigl(1-\alpha\bar A\bigr)^{-1} \bm \chi_x^T$ is the vector of positive contributions to the PageRank of $x$, modulo the normalization
factor $1-\alpha$. As such, it is made of positive
values adding up to at most $1/(1-\alpha)$. When the vector is multiplied by $\bm\delta$, in the worst case ($d=0$) we obtain $1/(1-\alpha)$, so
given the conditions on $\alpha$ it is easy to see that the denominator is
positive. This implies that we can gather all constants in a single
positive constant $c$ and just write
\[
\bm v\bigl(1-\alpha\bar A'\bigr)^{-1} = \bigl(\bm v - c\bm\delta\bigr)\bigl(1-\alpha\bar A\bigr)^{-1}.
\]    
The above equation rewrites the rank-one correction due to the addition of the arc $x\to y$ as a formal correction of the preference vector. We
are interested in the difference 
\[\bigl(\bm v - c\bm\delta\bigr)\bigl(1-\alpha\bar A\bigr)^{-1} -\bm v\bigl(1-\alpha\bar A\bigr)^{-1} = - c\bm\delta\bigl(1-\alpha\bar A\bigr)^{-1},\]
as we can conclude our proof by just showing that its $y$-th coordinate is strictly positive.
 
We now note that being $\bigl(1-\alpha\bar A\bigr)$ strictly diagonally dominant, the (nonnegative) inverse $B=\bigl(1-\alpha\bar A\bigr)^{-1}$ has the property
that the entries $b_{ii}$ on the diagonal are strictly larger than off-diagonal 
entries $b_{ki}$ on the same column~\cite[Remark 3.3]{MNSIMIGUM}, and in particular they are nonzero. Thus, if $d=0$
\[\bigl[-c\bm\delta\bigl(1-\alpha\bar A\bigr)^{-1}\bigr]_y  = \frac c{d+1} b_{yy} > 0,\]
and if $d\neq0$
\[\bigl[-c\bm\delta\bigl(1-\alpha\bar A\bigr)^{-1}\bigr]_y  = \frac c{d+1} b_{yy} - \sum_{x\to z}\frac c{d(d+1)} b_{zy}>\frac c{d+1} b_{yy} - \sum_{x\to z}\frac c{d(d+1)} b_{yy}= 0.\]

\medskip
We remark that the above discussion applies to PageRank as defined by~(\ref{eq:prinv}):
if the scores are forced to be $\ell_1$-normalized, the score-monotonicity axiom may fail to hold even 
under the assumption that the score of $x$ is positive. If we take, for example, the graph with adjacency matrix
\[
	A=\left(\begin{array}{cc}0 & 0\\ 1 & 0\end{array}\right)
\]
and the preference vector $\bm v = (0,1)$, we have $\bm p=(\alpha(1-\alpha),1-\alpha)$.
Adding one arc from the first node to the second one yields  $\bm p=(\alpha/(1+\alpha),1/(1+\alpha))$: the score  
of the second node increases ($1/(1+\alpha)$ is always larger than $1-\alpha$) as stated in the theorem. However, the \emph{normalized} PageRank score
of the second node does not change (it is equal, in both cases, to $1/(1+\alpha)$).

\section{Roundup}

All our results are summarized in Table~\ref{tab:axiomatic}, where we distilled
them into simple yes/no answers to the question: does a given centrality measure satisfy the axioms?

It was surprising for us to discover that \emph{only harmonic centrality satisfies all axioms}.\footnote{It is interesting
to note that it is actually the only centrality satisfying the size axiom---in fact, one needs a cycle
of $\approx e^k$ nodes to beat a $k$-clique.}
All spectral centrality measures are sensitive to density.
Row-normalized spectral centrality measures (Seeley's index, PageRank and SALSA) are insensitive to size, whereas
the remaining ones are only sensitive to the increase of $k$ (or $p$ in the
case of betweenness). All non-attenuated spectral measures are also non-monotone.
Both Lin's and closeness centrality fail density tests.\footnote{We note that since $D_{k,p}$ is strongly connected, closeness and Lin's 
centrality differ just by a multiplicative constant.}
Closeness has, indeed, the worst possible behavior, failing to satisfy all our axioms. While this result might seem
counterintuitive, it is actually a consequence of the known tendency 
of very far nodes to dominate the score, hiding the contribution of closer nodes, whose presence is more correlated
to local density.

All centralities satisfying the density axiom have no watershed: the axiom is satisfied for all $p,k\geq 3$.
The watershed for closeness (and Lin's index) is $k\leq p$, meaning that they just miss it, 
whereas the watershed for betweenness is a quite
pathological condition ($k\leq (p^2+p+2)/4$): one needs a clique whose size is \emph{quadratic} in the size of the cycle before
the node of the clique on the bridge becomes more important than the one on the cycle (compare this with closeness, where $k=p+1$ is sufficient).

We remark that our results on geometric indices do not change if we replace the directed cycle with a symmetric (i.e., undirected) cycle, with
the additional condition that $k>3$. It 
is possible that the same is true also of spectral centralities, but the geometry of the paths of the 
undirected cycle makes it extremely difficult to carry on the analogous computations in that case. 

\begin{table}
\centering
\begin{tabular}{l|c|c|c}
Centrality & Size & Density & Score monotonicity \\
\hline
Degree  & only $k$ & yes & yes\\
Harmonic & yes & yes & yes \\ 
Closeness & no & no & no \\
Lin & only $k$& no & no \\ 
Betweenness & only $p$ & no & no \\ 
Dominant & only $k$ & yes & no \\
Seeley & no & yes & no \\
Katz  & only $k$ & yes & yes\\
PageRank & no & yes & yes  \\
HITS & only $k$ &yes & no  \\
SALSA & no & yes & no \\
\end{tabular}
\caption{\label{tab:axiomatic} For each centrality and each axiom, we report whether it is satisfied.}
\end{table}

\section{Sanity check via information retrieval}

Information retrieval has developed in the last fifty years a large body of
research about extracting knowledge from data. In this section, we want to
leverage the work done in that field to check that our axioms
actually describe interesting features of centrality measures. We are in this sense
following the same line of thought as in~\cite{NZTHW}: in that paper, the
authors tried to establish in a methodologically sound way which of degree, HITS
and PageRank works better as a feature in web retrieval.
Here we ask the same question, but we include for the first time also geometric
indices, which had never been considered before in the literature about
information retrieval, most likely because it was not possible to compute them
efficiently on large networks.\footnote{It is actually now possible to approximate them efficiently~\cite{BoVHB}.}

The community working on information retrieval developed a number of standard datasets with 
associated queries and ground truth about which documents are relevant for every query; those
collections are typically used to compare the (de)merits of new retrieval methods.
Since many of those collections are made of hyperlinked documents, it is possible to use them 
to assess centrality measures, too. 

In this paper we consider the somewhat classical TREC GOV2
collection (about 25 million web documents) and the 149 associated queries. For each
query (\emph{topic}, in TREC parlance), we have solved the corresponding Boolean conjunction of terms, obtaining
a subset of matching web pages. Each subset induces a graph (whose nodes are the pages satisfying the conjunctive query), which can then be ranked using
any centrality measure. 
Finally, the pages in the graph are listed in score order
as results of the query, and standard relevance measures can be applied to see how much they correspond to the available 
ground truth about the assessed relevance of pages to queries.

There are a few methodological remarks that are necessary before discussing the
results:
\begin{itemize}
  \item The results we present are for GOV2; there are other publicly available
  collections with queries and relevant documents that can be used to this purpose.
  \item As observed in earlier works~\cite{NZTHW}, centrality scores in isolation have
  a very poor performance when compared with text-based ranking functions, but can improve
  the results of the latter. We purposely avoid measuring performance in conjunction with
  text-based ranking because this would introduce further parameters. Moreover, our idea is
  using information-retrieval techniques to judge centrality measures, not improving
  retrieval performance \emph{per se} (albeit, of course, a better centrality measure could be used to improve the quality of retrieved
  documents).
  \item Some methods are claimed to work better if \emph{nepotistic links} (that is, links between 
  pages of the same host) are excluded from the graph. Therefore, we report also results on
  the procedure applied to GOV2 with all intra-host links removed.
  \item There are several ways to build a graph associated with a query. Here, we choose a very
  straightforward approach---we solve the query in conjunctive form and build the induced subgraph. Variants may
  include enlarging the resulting graph with successors/predecessors, possibly by sampling~\cite{NGPLM}.
  \item There are many measures of effectiveness that are used in information retrieval; among those, we 
  focus here on the Precision at 10 (P@10, i.e., fraction of relevant documents retrieved among the first ten) and 
  on the NDCG@10~\cite{JKCGBEIRT}.
  \item Because of the poor performance, even for the best documents about half of the queries have null score. Thus,
  the data we report must be taken with a grain of salt---confidence intervals for our measures of effectiveness 
  would be largely overlapping (i.e., our experiments have limited statistical significance).
\end{itemize}

Our results are presented in Table~\ref{tab:ir}: even if obtained in a completely 
different way, they confirm the information we have been gathering with our axioms.
Harmonic centrality has the best overall scores. When we eliminate nepotistic links, the landscape
changes drastically---SALSA and PageRank now lead the results---but the best
performances are \emph{worse} than those obtained using the whole structure of the
web. Note that, again consistently with the information gathered up to now,
closeness performs very badly and betweenness performs essentially like using no
ranking at all (i.e., showing the documents in some arbitrary order).

There are four new centrality measures appearing in Table~\ref{tab:ir} that deserve an explanation.
When we first computed these tables, we were very puzzled: HITS is supposed to
work very badly on disconnected graphs (it fails score monotonicity), whereas it provided
the second best ranking after harmonic centrality. Also,
when one eliminates nepotistic links the graphs become highly disconnected and all rankings tend to
correlate with one another simply because most nodes obtain a null score. How is it possible that
PageRank and SALSA work so well (albeit less than harmonic centrality on
the whole graph) with so little information?

% RODO: ``algorithms''?

Our suspect was that \emph{these measures were actually picking up some much more
elementary signal than their definition could make one think}. In a highly
disconnected graph, the values assigned by such measures depend mainly on
the indegree and on some additional ranking provided by coreachable (or weakly
reachable) nodes.

We thus devised four ``naive'' centrality measures around our axioms. These measures
depend on density, and on size. We use two kinds of scores based on density:
the indegree and the \emph{negative $\beta$-measure}~\cite{BrGIBEO},\footnote{Note that the $\beta$-measure originally
defined by van den Brink and Gilles in~\cite{BrGIBEO} is the positive version, that is, the
negative $\beta$-measure can be obtained by applying the $\beta$-measure defined in~\cite{BrGIBEO} to the transposed graph.} 
 that is,
\[
\sum_{y\to x}\frac1{d^+(y)}.
\]
The negative $\beta$-measure is a kind of ``Markovian indegree'' inspired by the $\ell_1$ normalization typical of Seeley's index,
PageRank, and SALSA. Indeed, indegree is expressible in matrix form as $\mathbf 1 A$, whereas
the negative $\beta$-measure is expressible as $\mathbf 1 \bar A$.
We also use two kinds of scores based on size: the number of coreachable
nodes, and the number of weakly reachable nodes. By multiplying a score based on density with a score based on size, we obtain
four centrality measures, displayed in Table~\ref{tab:naive}, which satisfy all our axioms.

\begin{table}
\renewcommand{\arraystretch}{1.5}
\centering
\begin{tabular}{l|c|c}
	           & Indegree & Negative $\beta$-measure \\
\hline
Number of coreachable nodes & Indegree$^\leftarrow$ & $\beta$-measure$^\leftarrow$ \\
Number of weakly reachable nodes & Indegree$^\leftrightarrow$ & $\beta$-measure$^\leftrightarrow$ \\
\end{tabular}
\caption{\label{tab:naive}The names and definition of the four naive centrality measures used in Table~\ref{tab:ir}. Each
centrality is obtained by multiplying the values described by its row and column labels.}
\end{table}

As it is evident from Table~\ref{tab:ir}, such simple measures outperform in
this test most of the very sophisticated alternatives proposed in the literature:
this shows, on one hand, that it is possible to extract information from the graph
underlying a query in very simple ways that do not involve any spectral or geometric technique
and, on the other hand, that designing centralities around our axioms actually pays off.
We consider this fact as a further confirmation that the traits of centrality represented
by our axioms are important.

\begin{table}
\centering
\begin{tabular}{l|r|r}
\multicolumn{3}{c}{All links} \\
\hline
& \multicolumn{1}{c|}{NDCG@10} & \multicolumn{1}{c}{P@10} \\
\hline
BM25 & 0.5842 & 0.5644 \\
\hline
Harmonic & 0.1438 & 0.1416 \\
Indegree${^\leftarrow}$ & 0.1373 & 0.1356 \\
HITS & 0.1364 & 0.1349 \\
Indegree${^\leftrightarrow}$ & 0.1357 & 0.1349 \\
Lin & 0.1307 & 0.1289 \\
$\beta$-measure${^\leftrightarrow}$ & 0.1302 & 0.1322\\
$\beta$-measure${^\leftarrow}$ & 0.1275 & 0.1248 \\
Katz $3/4\lambda$ & 0.1228 & 0.1242 \\
Katz $1/2\lambda$ & 0.1222 & 0.1228 \\
Indegree & 0.1222 & 0.1208 \\
Katz $1/4\lambda$ & 0.1204 & 0.1181 \\
SALSA & 0.1194 & 0.1221 \\
Closeness & 0.1093 & 0.1114 \\
PageRank $1/2$ & 0.1091 & 0.1094 \\
PageRank $1/4$ & 0.1085 & 0.1107 \\
Dominant & 0.1061 & 0.1027 \\
PageRank $3/4$ & 0.1060 & 0.1094 \\
Betweenness & 0.0595 & 0.0584 \\
\hline
--- & 0.0588 & 0.0577 \\
\end{tabular}\qquad\qquad
\begin{tabular}{l|r|r}
\multicolumn{3}{c}{Inter-host links only} \\
\hline
& \multicolumn{1}{c|}{NDCG@10} & \multicolumn{1}{c}{P@10} \\
\hline
BM25 & 0.5842 & 0.5644 \\
\hline
$\beta$-measure${^\leftrightarrow}$ & 0.1417 & 0.1349\\
SALSA & 0.1384 & 0.1282\\
PageRank $1/4$ & 0.1347 & 0.1295\\
$\beta$-measure${^\leftarrow}$ & 0.1328 & 0.1275 \\
Indegree${^\leftrightarrow}$ & 0.1318 & 0.1255\\
PageRank $1/2$ & 0.1315 & 0.1268\\
PageRank $3/4$ & 0.1313 & 0.1255\\
Katz $1/2\lambda$ & 0.1297 & 0.1262\\
Indegree${^\leftarrow}$ & 0.1295 & 0.1262\\
Harmonic & 0.1293 & 0.1262\\
Katz $1/4\lambda$ & 0.1289 & 0.1255\\
Lin & 0.1286 & 0.1248\\
Indegree & 0.1283 & 0.1248\\
Katz $3/4\lambda$ & 0.1278 & 0.1242\\
HITS & 0.1179 & 0.1107\\
Closeness & 0.1168 & 0.1121\\
Dominant & 0.1131 & 0.1067\\
Betweenness & 0.0588 & 0.0577\\
\hline
--- & 0.0588 & 0.0577 \\
\end{tabular}
\caption{\label{tab:ir}Normalized discounted cumulative gain (NDCG) and precision at 10 retrieved documents (P@10) for
the GOV2 collection using all links and  using only inter-host links. The tables include, for reference,
the results obtained using a state-of-the-art text ranking function, BM25, and a final line obtained by
applying no ranking function at all (documents are sorted by the document identifier).}
\end{table}

\section{Conclusions and future work}

We have presented a set of axioms that try to capture part of the intended behavior of
centrality measures. We have proved or disproved all our axioms for ten
classical centrality measures and for \emph{harmonic centrality}, a
variant to Bavelas's closeness that we have introduced in this paper. The
results are surprising and confirmed by some information-retrieval experiments: harmonic centrality
is a very simple measure providing a good notion of centrality. It is almost identical
to closeness centrality on undirected, connected networks but provides a sensible centrality
notion for arbitrary directed graphs.

There is of course a large measure of arbitrariness in the formulation of our
axioms: we believe that this is actually a \emph{feature}---building an
ecosystem of interesting axioms is just a healthy way of understanding
centrality better and less anecdotally. Promoting the growth of such an
ecosystem is one of the goals of this work.

As a final note, the experiments on information retrieval that we have reported are just the beginning. Testing with
different collections (and possibly with different ways of generating the graph
associated with a query) may lead to different results. Nonetheless, we believe we
have made the important point that \emph{geometric measures are relevant not
only to social networks, but also to information retrieval}. In the literature
comparing exogenous (i.e., link-based) rankings one can find different instances
of spectral measures and indegree, but up to now the venerable measures based
on distances have been neglected. We suggest that it is time to change this
attitude.

\section{Acknowledgments}

We thank David Gleich for useful pointers leading to the proof of the
score monotonicity of PageRank in the general case, and Edith Cohen for useful
discussions on the behavior of centrality indices.
Marco Rosa participated in the first phases of development of this paper.

\hyphenation{ Vi-gna Sa-ba-di-ni Kath-ryn Ker-n-i-ghan Krom-mes Lar-ra-bee
  Pat-rick Port-able Post-Script Pren-tice Rich-ard Richt-er Ro-bert Sha-mos
  Spring-er The-o-dore Uz-ga-lis }\hyphenation{ Vi-gna Sa-ba-di-ni Kath-ryn
  Ker-n-i-ghan Krom-mes Lar-ra-bee Pat-rick Port-able Post-Script Pren-tice
  Rich-ard Richt-er Ro-bert Sha-mos Spring-er The-o-dore Uz-ga-lis }

\end{document}